\newtheorem{theo}{Theorem}
\newtheorem{lem}{Lemma}
\newtheorem{assert}{Assertion}
\newtheorem{defi}{Definition}
\newtheorem{remark}{Remark}
\newtheorem*{pb}{Problem}
\newcommand{\identity}{I}
\newcommand{\rref}[1]{(\ref{#1})}
\newcommand{\scal}{\mathcal{S}}
\newcommand{\pcal}{\mathcal{P}}
\newcommand{\norme}[1]{\left\Vert #1\right\Vert}
\newcommand{\abs}[1]{\vert #1 \vert}
\newcommand{\reels}{\mathbb{R}}
\newcommand{\entiers}{\mathbb{N}}
\newcommand{\sign}{\text{sign}}
\journal{Systems $\&$ Controls Letters}
\begin{document}

\begin{frontmatter}



\title{Global stabilization of classes of linear control systems with bounds on the feedback and its successive derivatives\tnoteref{tit1}}

\tnotetext[tit1]{This research was partially supported by a public grant overseen by the French ANR as part of the “Investissements d'Avenir” program, through the iCODE institute, research project funded by the IDEX Paris-Saclay, ANR-11-IDEX-0003-02.}

\author{Jonathan Laporte, Antoine Chaillet and Yacine Chitour}
\ead{jonathan.laporte, antoine.chaillet, yacine.chitour@l2s.centralesupelec.fr}
\address{L2S - Univ. Paris Sud - CentraleSup\' elec. 3, rue Joliot-Curie. 91192 - Gif sur Yvette, France}

\begin{abstract}
In this paper, we address the problem of globally stabilizing a linear time-invariant (LTI) system by means of a static feedback law whose amplitude and successive time derivatives, up to a prescribed order $p$, are bounded by arbitrary prescribed values. We solve this problem for two classes of LTI systems, namely integrator chains and skew-symmetric systems with single input. For the integrator chains, the solution we propose is based on the nested saturations introduced by A.R. Teel. We show that this construction fails for skew-symmetric systems and propose an alternative feedback law. We illustrate these findings by the stabilization of the third order integrator with prescribed bounds on the feedback and its first two derivatives, and similarly for the harmonic oscillator with prescribed bounds on the feedback and its first derivative.

\end{abstract}

\begin{keyword}
Global stabilization \sep Bounded controls \sep Rate saturation \sep Multiple integrators \sep Oscillators.



\end{keyword}

\end{frontmatter}


\section{Introduction}
\label{Intro}

Actuator constraints constitute an important practical issue in control applications since they are a possible source of instability or performance degradation. Strong research efforts have been devoted to the stabilization of linear time-invariant (LTI) plants. LTI systems are known to be global stabilizable despite actuator saturations (i.e., by bounded inputs) if and only if they are stabilizable in the absence of input constraints and their internal dynamics has no eigenvalues with positive real part \cite{SSY}. For systems that do not fulfill these constraints, several approaches provide stabilization from an arbitrarily large compact set of initial conditions (semiglobal stability); these include for instance \cite{Lin1993225,AlvarezRamirez1994247,hu2001semi,Tarbouriech:2006ew}. Some of these semiglobal approaches can be extended to robust stabilization in presence of exogenous disturbances \cite{Saberi:2002ux,Dai:2009kv}.

The objective of globally stabilizing plants by a bounded feedback remains of practical relevance, since the resulting control gains do not depend on the magnitude of initial states. Procedures ensuring global stability of nonlinear plants by bounded feedback have been proposed in \cite{Lin:1991vw,Mazenc:2002wt} and robustness to exogenous inputs have been addressed in \cite{ANCHMA05,AZCHCHGR15}. Among the LTI systems that can be globally stabilized by bounded feedback, chains of integrators have received specific attention. The simple saturation of a linear feedback fails at ensuring global stability as soon as the integrator chain is of dimension greater than or equal to three \cite{FULLER69,SY91}. In \cite{Teel92} a globally stabilizing feedback was constructed using {\it nested saturations} for a chain of integrators of arbitrary length. This construction has been extended to all LTI plants that can be stabilized by bounded feedback in \cite{SSY}, in which a family of stabilizing feedback laws was proposed as a linear combination of saturation functions.  In \cite{marchand2005global}, the issue of performance of these bounded feedbacks is investigated for chains of integrators and some improvements are achieved by using variable levels of saturation. Global approaches ensuring robustness to exogenous disturbances have also been investigated.  The first general solution to the $L_p$ finite-gain stabilization problem was provided in \cite{saberi2000}, based on a gain scheduled feedback initially proposed in \cite{Megretski96bibooutput}. An alternative easily implementable solution to that problem was recently proposed in \cite{chitour2015} for chains of integrators. 
As for neutrally stable systems, such a solution was first given in \cite{LICHSO96}.

While actuation magnitude is often the main concern in practical applications, limited actuation reactivity can also be an issue. Indeed, technological constraints may affect not only the amplitude of the delivered control signal, but also the amplitude of its time derivative. This latter problem is known as rate saturation and has been addressed for instance in \cite{lauvdal97,Freeman:1998tp, SilvaTarbouch03,Galeani,saberi2012}.  Semiglobal stabilization has been achieved via a gain scheduling technique \cite{lauvdal97}, or through low-gain feedback or low-and-high-gain feedback \cite{saberi2012}. In \cite{SilvaTarbouch03, Galeani}, regional stability was ensured through LMI-based conditions. In \cite{Freeman:1998tp}, this problem has been addressed for nonlinear plants using backstepping procedure ensuring global stability. 

In this paper, we deepen the investigations on global stabilization of integrator chains subject to bounded actuation with rate constraints. We consider rate constraints that affect not only the first time derivative of the control signal, but also its successive $p$ first derivatives, where $p$ denotes an arbitrary positive integer. We specifically study two classes of systems that can be globally stabilized by bounded state feedback, namely chains of integrator and skew-symmetric dynamics with single input.  No restriction is imposed on the dimension of these systems. We show that solving the problem for these two cases actually cover wider classes of systems, namely all systems with either only zero eigenvalues or only simple eigenvalues with zero real part. For both these classes of LTI systems, we propose a bounded static feedback law that ensures global asymptotic stability of the closed-loop system, and whose magnitude and $p$ first time derivatives are bounded by arbitrary prescribed values. For the chains of integrators, the proposed control law is based on the nested saturations procedure introduced in \cite{Teel92}. We rely on specific saturation functions, which are linear in a neighborhood of the origin and constant for large values of their argument. Unfortunately, we show that this nested saturations feedback fails solve the problem for skew-symmetric dynamics. For the latter class of systems, we propose an alternative construction.

This paper is organized as follows. In Section \ref{section:stat_main_res}, we provide definitions and state our main results for both considered classes of LTI systems. The proofs of the main results are provided in Section \ref{sec: proof} based on several technical lemmas. In Section \ref{sec:simu11}, we test the efficiency of the proposed control laws via numerical simulations on the third order integrator and the harmonic oscillator, where we bound with prescribed values the feedback, as well as its first two time derivatives for the  third order integrator and it first time derivative for the harmonic oscillator respectively.

\textbf{Notations.}
The function $\sign:\reels \backslash \lbrace{0\rbrace}\to \mathbb R$ is defined as $\sign(r) := r / \abs{r}$. Given a set $I\subset \mathbb R$ and a constant $a\in\mathbb R$, we let $I_{\geq a}:=\left\{x\in I\,:\, x\geq a\right\}$.  Given $k\in\mathbb N$ and $n,p\in\mathbb N_{\geq 1}$, we say that a function $f : \reels^n \rightarrow \reels^p$ is of class $C^{k}(\reels^n , \reels^p)$ if its differentials up to order $k$ exist and are continuous, and we use $f^{(k)}$ to denote the $k$-th order differential of $f$. By convention, $f^{(0)}:=f$. The factorial of $k$ is denoted by $k!$ and the binomial coefficient is denoted $\binom{k}{m}:=\frac{k!}{m!(k-m)!}$. We define $\llbracket m , k \rrbracket:=\left\{l\in\mathbb N\,:\, l\in[m,k]\right\}$. We use $\reels^{n,n}$ to denote the set of $n\times n$ matrices with real coefficients. The matrices $\identity_n$ and $J_n\in \reels^{n,n}$ denote the identity matrix of dimension $n$ and the $n$-th Jordan block respectively, i.e., the $n\times n$ matrix given by $(J_n)_{i,j} =1$ if $i=j-1$ and zero otherwise.
For each $i\in\llbracket 1 , n \rrbracket$, $e_i\in \reels^n$ refers to the column vector with coordinates equal to zero except the $i$-th one equal to one. We use $\Vert x\Vert$ to denote the Euclidean  norm of an arbitrary vector $x\in\reels^n$. For two sets $A$ and $B$, the relative complement of $A$ in $B$ is denoted by $B \backslash A$.
\section{Statement of the main results}
\label{section:stat_main_res}

\subsection{Problem statement}

We start by introducing in more details the general problem we address. Given $n\in\mathbb N_{\geq 1}$,  consider LTI systems with single input:
\begin{equation}
\label{sys:linear}
\dot{x}= Ax +Bu,
\end{equation}
where $x \in \reels^n$, $A$ and $B$ are $n \times n$ and $n \times 1$ matrices respectively. Assume that the pair $(A,B)$ is stabilizable and that all the eigenvalues of $A$ have non positive real parts. Recall that these assumptions on $(A,B)$ are necessary and sufficient for the existence of a bounded continuous state feedback $u=k(x)$ which globally asymptotically stabilizes the closed-loop system \cite{SSY}. We say that an eigenvalue of $A$ is {\it critical} if it has zero real part. 

Given a family of prescribed bounds $(R_j)_{0 \leq j \leq p}$ on the control signal and its successive $p$-first derivative, we start by introducing the notion of $p$-bounded feedback law by $(R_j)_{0 \leq j \leq p}$ for system \rref{sys:linear}. This terminology will be used all along the document.

\begin{defi}
Given $n\in\mathbb N_{\geq 1}$ and $p\in\mathbb N$, let $(R_j)_{0 \leq j \leq p}$ denote a family of positive constants. We say that $\nu : \reels^n \rightarrow \reels $ is a \textit{$p$-bounded feedback law by $(R_j)_{0 \leq j \leq p}$  for system \rref{sys:linear}} if it is of class $C^p(\reels^n,\reels)$ and, for every trajectory of the closed-loop system $\dot{x}= A x +B \nu(x)$, the control signal $U : \reels_{\geq 0} \rightarrow  \reels $ defined by $U(t) := \nu (x(t)) $ for all $t \geq 0$ satisfies, for all $j \in \llbracket 1 , p \rrbracket$, $\sup \lbrace  \abs{U^{(j)}(t)} : t \geq 0 \rbrace  \leq R_j $.
\end{defi}

Based on this definition, we can restate our stabilization problem as follows.
\begin{pb}
Given $p\in\mathbb N$ and a family of positive real numbers $(R_j)_{0 \leq j \leq p}$, design a feedback law $\nu $ such that the origin of the closed-loop system $\dot{x}=  A x +B \nu(x)$ is globally asymptotically stable (GAS for short) and the feedback $\nu : \reels^n \rightarrow \reels $ is a $p$-bounded feedback law by $(R_j)_{0 \leq j \leq p}$ for System \rref{sys:linear}.
\end{pb}

The case $p=0$ corresponds to global stabilization with bounded state feedback and has been addressed in e.g. \cite{Teel92,SSY}. The case $p=1$ corresponds to global stabilization with bounded state feedback and limited rate, in the line of e.g. \cite{SilvaTarbouch03, Galeani, lauvdal97, saberi2012, Freeman:1998tp}.

In this paper, we present a general solution to the problem at stake for two classes of LTI systems: \textbf{Case 1} all the critical eigenvalues of $A$ are zero, \textbf{Case 2}  all the critical eigenvalues of A are simple and have zero real parts.

Since the pair $(A,B)$ is stabilizable, there exists a linear change of coordinates transforming the matrices $A$ and $B$ into $
\begin{pmatrix}A_1\  A_3\\0\ \ A_2\end{pmatrix}$ and $ \begin{pmatrix}B_1\\ B_2\end{pmatrix}$, 
where $A_1$ is Hurwitz, the eigenvalues of $A_2$ have zero real parts and the pair $(A_2,B_2)$ is controllable. Then, it is immediate to see that we only have to treat the case where $A$ has only critical eigenvalues. From now on, we therefore assume that $A$ has only eigenvalues with zero real part, and that the pair $(A,B)$ is controllable.

\subsection{Multiple integrators}
In Case 1, up to a linear change of coordinates, $A$ can be put in a block-diagonal form with Jordan blocks $J_r$ on the diagonal. It is then clear that, up to an additional linear change of coordinates, addressing Case 1 amounts to dealing with the sole case of a multiple integrator of arbitrary length $n$, i.e. the LTI control system given by
\begin{equation}
\label{mult_int}
\dot{x}_i = x_{i+1}, \quad i < n \quad  \text{and} \quad \dot{x}_n =u. 
\end{equation}
Letting $x:=(x_1,\ldots,x_n)^T$, system \rref{mult_int} can be compactly written as $\dot{x}= J_n x +e_n u$.

Inspired by \cite{Teel92}, our design of a $p$-bounded feedback for this system is based on a nested saturations feedback. We focus on the specific class of saturations that are linear around zero, and constant for large values of their argument.

\begin{defi}
\label{def:S(p)}
Given $p\in\mathbb N$, $\scal (p)$ is defined as the set of all odd functions $ \sigma $ of class $ C^{p}(\reels , \reels)$ such that there exist positive constants $\alpha$, $L$, $\sigma^{max}$ and $S$ satisfying, for all $r \in \reels$, $r \sigma (r) > 0$ when $r \neq 0$, $\sigma (r)= \alpha r $ for all $\abs{r} \leq L$, and $\abs{\sigma (r)} =\sigma^{max} $, when $\abs{r} \geq S$.
In the sequel, we associate with every \emph{$\sigma \in \scal(p)$ the $4$-tuple $(\sigma^{max},L,S,\alpha)$}.
\end{defi}
\begin{figure}[thpb]
      \centering
      \includegraphics[scale=0.27]{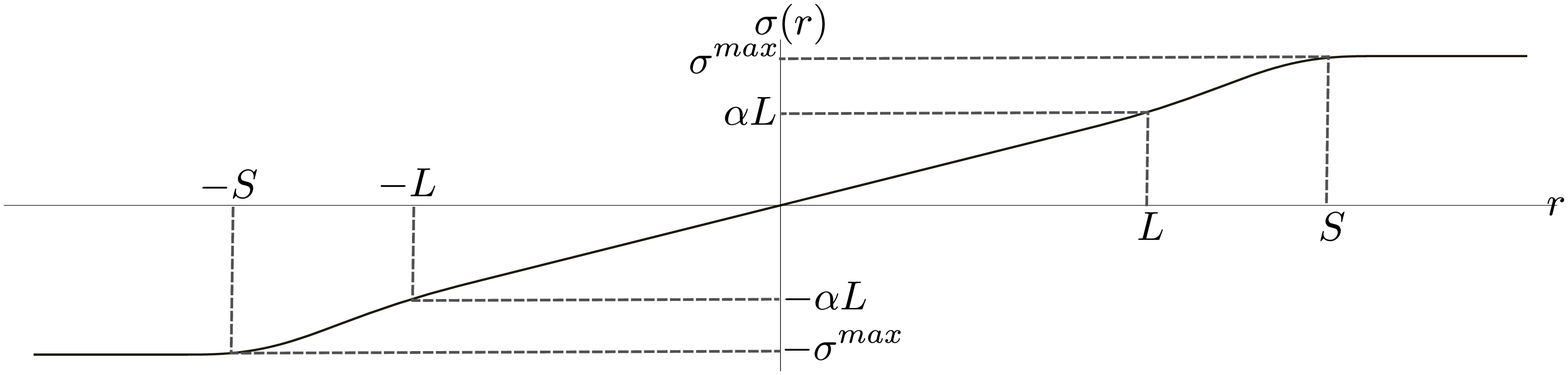}
      \caption{A typical example of a $\scal(p)$ saturation function with constants $(\sigma^{max},L,S,\alpha)$.}
      \label{fig:S(p)}
\end{figure}
The constants $\sigma^{max},L,S,\alpha$ will be widely used throughout the paper, see Fig. \ref{fig:S(p)} to fix ideas. Notice that it necessarily holds that $S \geq L$ and the equality may only hold when $p = 0$. We also stress that the successive derivatives up to order $p$ of an element of $\scal(p)$ are bounded. An example of such function is given in Section \ref{sec:simu} for $p=2$. 
The first result of this paper establishes that global stabilization on any chain of integrators by bounded feedback with constrained $p$ first derivatives can always be achieved by a particular choice of nested saturations. In other words, it solves the Problem in Case 1.
\begin{theo}
\label{Main_res}
Given $n\in\mathbb N_{\geq 1}$ and $p\in\mathbb N$, let $(R_j)_{0 \leq j \leq p}$ be a family of positive constants. For every set of saturation functions  $ \sigma_1 , \ldots , \sigma_n \in \scal(p)$, there exist vectors $ k_1 , \ldots , k_n $ in $\reels^n$, and positive constants  $ a_1 , \ldots , a_n $ such that the feedback law $\nu$ defined, for each $x\in\mathbb R^n$, as
\begin{align}
\label{nested_com_th}
\nu (x)= - a_n \sigma_n \Big( k_n^T x+a_{n-1} \sigma_{n-1}\big(k_{n-1}^T x +\ldots  + a_1  \sigma_1 (k_1^T x )\big) \ldots \Big)
\end{align}
is a $p$-bounded feedback law by $(R_j)_{0 \leq j \leq p}$ for system \rref{mult_int}, and the origin of the closed-loop system $\dot{x}= J_n x +e_n \nu(x)$ is GAS.
\end{theo}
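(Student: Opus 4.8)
\noindent\emph{Proof strategy.}
The plan is to choose $k_1,\dots,k_n\in\reels^n$ to be the rows of an \emph{upper triangular} matrix $T$ with nonzero diagonal — equivalently, $k_i$ is supported on $\llbracket i,n\rrbracket$ with $(k_i)_i\neq0$ — the off-diagonal entries of $T$ and the gains $a_1,\dots,a_n$ being borrowed from the nested-saturation construction of \cite{Teel92}. In the coordinates $y:=Tx$ the multiple integrator \rref{mult_int} retains a triangular form: since each of $x_{i+1},\dots,x_n$ is a linear combination of $y_{i+1},\dots,y_n$ and $\dot y_i=\sum_{l=i}^{n-1}(k_i)_l\,x_{l+1}+(k_i)_n\,u$, the derivative $\dot y_i$ is a linear combination of $y_{i+1},\dots,y_n$ plus a multiple of the input. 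Writing $s_1:=y_1$ and $s_i:=y_i+a_{i-1}\sigma_{i-1}(s_{i-1})$ for $i\geq2$, the feedback \rref{nested_com_th} becomes $\nu(x)=-a_n\sigma_n(s_n)$, which belongs to $C^p(\reels^n,\reels)$ since each $\sigma_i\in\scal(p)$ does.

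Global asymptotic stability is then obtained as in the classical nested-saturation analysis: with the off-diagonal entries of $T$ and the gains $a_i$ small enough and suitably ordered, a downward induction on $i$ together with a sign argument shows that the argument $s_i(t)$ of $\sigma_i$ enters $[-L_i,L_i]$ in finite time and remains there; once all the $s_i$ lie in their linear zones, $\nu$ coincides with a linear feedback $-c^Tx$ whose closed loop is Hurwitz, which, combined with the forward invariance of each zone, yields GAS (see \cite{Teel92,SSY}). This is the content of the technical lemmas of Section~\ref{sec: proof}.

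The new ingredient is the bound on the successive derivatives of $U$. Writing $f(x):=J_nx+e_n\nu(x)$ for the closed-loop vector field and $\mathcal{L}_f$ for the corresponding Lie derivative, one has $U^{(j)}(t)=(\mathcal{L}_f^{\,j}\nu)(x(t))$, so it suffices to show that $\mathcal{L}_f^{\,j}\nu$ is a \emph{bounded function on $\reels^n$} for every $j\in\llbracket0,p\rrbracket$. The key is a localization lemma: on the set $\Sigma_m:=\{x\in\reels^n:\ \abs{s_i(x)}<S_i\ \text{for all}\ i\in\llbracket n-m,n\rrbracket\}$ the coordinates $y_n,y_{n-1},\dots,y_{n-m}$ are bounded by constants depending only on the $\sigma_i$ and the $a_i$; this follows by induction on $m$ from $s_i=y_i+a_{i-1}\sigma_{i-1}(s_{i-1})$ and $\abs{\sigma_{i-1}}\leq\sigma_{i-1}^{max}$, since $\abs{s_i}<S_i$ bounds $y_i$ once $y_{i+1},\dots,y_n$ are bounded. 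Now $\mathcal{L}_f^{\,j}\nu=-a_n\,\mathcal{L}_f^{\,j}(\sigma_n(s_n))$, and expanding the right-hand side by the chain and product rules — repeatedly using $\mathcal{L}_f s_i=\dot y_i+a_{i-1}\,\sigma_{i-1}'(s_{i-1})\,\mathcal{L}_f s_{i-1}$ and $\mathcal{L}_f s_1=\dot y_1$ — yields a finite sum of terms, each a product of a monomial in $a_1,\dots,a_n$, finitely many factors $\sigma_i^{(q)}(s_i)$ with $1\leq q\leq j\leq p$ (bounded because $\sigma_i\in\scal(p)$), and a polynomial in the $y_l$ and in $\nu$. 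A routine induction shows that in each such term the indices $i$ for which a factor $\sigma_i^{(q)}(s_i)$ with $q\geq1$ appears form a set $\llbracket j_0,n\rrbracket$ containing every $l$ with $y_l$ present; the term therefore vanishes outside $\Sigma_{n-j_0}$ and is bounded there by the localization lemma, and it carries at least the factor $a_n$. Hence $\sup_{\reels^n}\abs{\mathcal{L}_f^{\,j}\nu}\leq a_n C_j$, where $C_j$ depends on $n$, $p$, the $\sigma_i$ and, in a bounded way, on the $a_i$. Fixing $k_1,\dots,k_n$ and then taking the gains small enough (and, for the GAS part, suitably ordered) that both the stabilization conditions above and $a_nC_j\leq R_j$ for all $j\in\llbracket0,p\rrbracket$ hold — which is possible since $a_nC_j\to0$ — completes the construction.

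The main obstacle is the dual role of the vectors $k_i$: they must be chosen simultaneously so as to make the nested-saturation stabilization go through \emph{and} to force the boundedness, along every closed-loop solution, of all the Lie derivatives $\mathcal{L}_f^{\,j}\nu$. The latter is not automatic — e.g.\ if the outermost argument $k_n^Tx$ is not aligned with $e_n$, one can exhibit trajectories along which $\dot U$ is unbounded — and it is precisely the upper triangular structure of $T$, which makes $\dot y_i$ depend only on $y_{i+1},\dots,y_n$ and on the (bounded) control, that reconciles the two demands. A secondary difficulty is the bookkeeping needed to check that, in the expansion of $\mathcal{L}_f^{\,j}\nu$, every term carries both its localizing chain of saturation-derivative factors and a positive power of $a_n$, so that all the prescribed bounds $R_j$ can be satisfied at once.
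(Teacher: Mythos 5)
Your machinery for bounding the derivatives (the localization of each $y_i$ on the set where $\abs{s_i}<S_i$, together with the fact that every occurrence of $y_l$ in the expansion of $\mathcal{L}_f^{\,j}\nu$ is multiplied by a chain of factors $\sigma_i^{(q)}(s_i)$, $q\geq 1$, $i\in\llbracket j_0,n\rrbracket$, which vanish in the saturation zones) is essentially the paper's Lemma \ref{lem:techn} and Assertion \ref{prop:2} rewritten with Lie derivatives, and that part is sound in spirit. The genuine gap is in the final tuning step, i.e.\ in how you make the derivative bounds smaller than the prescribed $R_j$ while keeping GAS. You fix $k_1,\dots,k_n$ (borrowed from \cite{Teel92}) and then let $a_n\to 0$, invoking $\sup\abs{\mathcal{L}_f^{\,j}\nu}\leq a_nC_j\to 0$. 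But the nested-saturation stability argument you appeal to is not gain-independent: it relies on an exact compatibility between the slope of the outermost saturation and the drift coefficients produced by the chosen coordinates. Already for $n=2$, with $y_1=c_1x_1+c_2x_2$ fixed and $u=-a_2\sigma_2(x_2+a_1\sigma_1(y_1))$, once $\sigma_2$ operates in its linear zone one gets $\dot y_1=(c_1-c_2a_2\alpha_{\sigma_2})x_2-c_2a_2\alpha_{\sigma_2}a_1\sigma_1(y_1)$, and since the first stage only guarantees $\abs{x_2}\lesssim a_1\sigma_1^{max}$, the cascade Lyapunov step requires $\abs{c_1-c_2a_2\alpha_{\sigma_2}}<c_2a_2\alpha_{\sigma_2}$, i.e.\ a \emph{lower} bound on $a_n$ in terms of the fixed $k_i$. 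So with the $k_i$ frozen, GAS (as proved via Teel's cancellation) forbids $a_n\to 0$, while with $a_n$ kept at its Teel-compatible value the quantity $a_nC_j$ is a fixed number that has no reason to be below an arbitrary prescribed $R_j$. The parenthetical ``taking the gains small enough (and, for the GAS part, suitably ordered)'' does not resolve this conflict: stability of the chain of integrators under nested saturations with arbitrarily small outer gain and \emph{fixed} coordinates is not what \cite{Teel92} provides, and for $n\geq 3$ it is exactly the delicate point (recall that even saturated linear feedback fails there).

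The paper resolves this tension differently, and this is the idea missing from your proposal: the tuning parameter is not the amplitude $a_n$ but the \emph{linear range} of the outermost saturation. One rescales $\mu_n(s)=R_0\,\sigma_n(sL_{\sigma_n}/\lambda)/\sigma_n^{max}$ (amplitude fixed at $R_0$, slope $\alpha_{\tilde\mu}/\lambda$) and simultaneously lets the coordinate change $H$ — hence the gain vectors $k_i$, in particular $k_n^Tx=x_nL_{\sigma_n}/\lambda$ — depend on $\lambda$, so that the slope/drift matching needed for the GAS cascade holds for \emph{every} $\lambda\geq 1$, while the bounds on $U^{(j)}$ take the form $P(1/\lambda)/\lambda$ and can be pushed below $\min\{R_1,\dots,R_p\}$ by choosing $\lambda$ large. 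In other words, the quantifier order must be: the scaling enters both the $k_i$ and the outer saturation jointly; fixing the $k_i$ first and shrinking the $a_i$ afterwards, as you propose, breaks the stability proof you are relying on. To repair your argument you would either have to adopt this joint rescaling, or supply a new GAS proof valid for fixed $k_i$ and arbitrarily small outer gain, which is a substantial additional task not covered by the references you cite.
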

The proof of this result is given in Section \ref{sec:proof_main_result} and the argument also provides an explicit choice of the gain vectors $ k_1, \ldots, k_n$ and constants $a_1,\ldots a_n $. 
\begin{remark}\label{rem:0}
Along the proof of Theorem \ref{Main_res} that the proposed construction allows to chose the magnitude of control signal independently of the magnitude of its $p$ first derivatives. More precisely, $a_n$ can be chosen to ensure that $\max \lbrace \abs{\nu (x)} : x \in \reels^n \rbrace= R_0$ and the gain vectors $ k_1, \ldots, k_n$ and constants $a_1,\ldots a_{n-1} $ can be taken in such a way that the $p$ first derivatives of the feedback are bounded by $(R_j)_{1 \leq j \leq p}$. 
\end{remark}
\begin{remark}
\label{rem:1}
In \cite{SSY}, a stabilizing feedback law was constructed using linear combinations of saturated functions. That feedback with saturation functions in $\scal(p)$ cannot be a $p$-bounded feedback for System \rref{mult_int} for $p\geq 1$. To see this, consider the double integrator, given by $\dot{x}_1=x_2, \: \dot{x}_2=u$. Any stabilizing feedback using a linear combination of saturation functions in $\scal (p)$ is given by $\nu (x_1,x_2) = - a  \sigma_1(b x_2) - c \sigma_2(d (x_2 + x_1))$, where the constants $a$, $b$, $c$, and $d$ are chosen to insure stability of the closed-loop system according to \cite{SSY}. Let $U(t)= \nu( x_1(t),x_2(t))$ for all $t \geq 0$. A straightforward computation yields, for $t\geq 0$, $\dot{U}(t) =- a b \sigma_{1}^{(1)}(a x_2 (t))  U(t) - c  d \sigma^{(1)}_{2}(d (x_2(t) + x_1(t))) (x_2(t) + U(t)).$
Consider now consider a solution with initial condition $x_2(0)=x_{20}$, and $x_1(0)=-x_{20}$ such that $\sigma_{1}^{(1)}(a x_{20})=0$. We then have $\dot{U}(0)  =  -cd \sigma^{(1)}_2(0) (x_{20} + U(0))$, whose norm is greater than $c_1(\vert x_{20}\vert -c_2)$ for some positive constants $c_1,c_2$. Thus $|\dot u(0)|$ grows unbounded as $|x_{20}|$ tends to infinity, which contradicts the definition of a $1$-bounded feedback.
\end{remark}

\subsection{Harmonic oscillators}

In Case 2, up to a linear change of coordinates, $A$ can be put in a block-diagonal form with skew-symmetric matrices on the diagonal. Addressing the stabilization problem under concern therefore amounts to only considering the following control system : $\dot{x}= Ax +bu$, where $x \in \reels^n$, $A \in \reels^{n, n}$ is skew-symmetric, $b \in \reels^n$ and the pair $(A,b)$ is controllable.

Unfortunately, the nested saturations feedback law given in \rref{nested_com_th} is a generic solution to the Problem for this class of systems only in the scalar case ($n=1$) or when when no rate constraint is imposed ($p=0$). To see why it may fail for $n\geq 2$, consider for instance the harmonic oscillator given by $\dot{x}_1= x_2$, $\dot{x}_2 = -x_1 + u $ (which we address in more details in Section \ref{sec:simu}) with a bounded stabilizing law given by $u =- \sigma (x_2)$ with $\sigma \in \scal (p)$ for some integer $p$. The time derivative of $u$ then satisfies $\abs{\dot{u}(t)} \geq \abs{\sigma^{(1)}(x_2(t))}(\abs{x_1(t)} - \abs{u(t)})$,
 which grows unbounded as $x_1$ goes unbounded and $\vert x_2\vert$ remains small (i.e. in the linear zone of $\sigma$). This prevents the feedback $- \sigma (x_2)$ to be a $1$-bounded feedback, hence a $p$-bounded feedback for all $p\geq 1$.

Our second result provides an alternative $p$-bounded feedback for the harmonic oscillator, thus solving the Problem in Case 2.

\begin{theo}
\label{theo2}
Given $n \in \entiers_{\geq 1}$ and $p\in\mathbb N$, let $(R_j)_{0 \leq j \leq p}$ be a family of positive constants, let $A \in \reels^{n, n}$ be a skew-symmetric matrix, and let $b \in \reels^n$ be such that the pair $(A,b)$ is controllable. Then, for any $\alpha \geq 1/2 $, there exists a positive constant $\beta$ such that the feedback law $\nu : \reels^n \to \reels$ defined as
\begin{equation}
\label{Yfeed}
\nu (x) := - \beta b^T x/ (1 + \norme{x}^2)^\alpha )
\end{equation} 
is a $p$-bounded feedback law by $(R_j)_{0 \leq j \leq p}$ for $\dot{x}= Ax +bu$, and the origin of the closed-loop system $\dot{x}=  A x + b \nu(x)$  is GAS.
\end{theo}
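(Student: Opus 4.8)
The plan is to derive global asymptotic stability from the radially unbounded Lyapunov function $V(x)=\norme{x}^{2}$ together with LaSalle's invariance principle, and to derive the bounds on the control signal and its successive derivatives by keeping $\nu$ and all of its Lie derivatives along the closed loop inside a class of uniformly bounded rational functions that is stable under the relevant differential operation.

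For the stability part, along any solution of $\dot x=Ax+b\nu(x)$ one has, using $A^{T}=-A$,
\begin{equation*}
\dot V=2x^{T}Ax+2(b^{T}x)\nu(x)=-\,\frac{2\beta\,(b^{T}x)^{2}}{(1+\norme{x}^{2})^{\alpha}}\ \le\ 0 ,
\end{equation*}
so $V$ is non-increasing; hence every solution remains in a fixed ball, is defined on $\reels_{\geq 0}$, and the origin is stable. Since $V$ is radially unbounded, by the global invariance principle it remains to check that the largest invariant subset of $\{x\in\reels^{n}:\ b^{T}x=0\}$ is $\{0\}$: on that set $\nu$ vanishes, so the dynamics reduces to $\dot x=Ax$, and a solution contained in it satisfies $\frac{d^{k}}{dt^{k}}(b^{T}x)=b^{T}A^{k}x\equiv 0$ for every $k\in\mathbb{N}$; evaluating at $t=0$ and using $b^{T}A^{k}=(-1)^{k}(A^{k}b)^{T}$ (again from skew-symmetry) gives $(A^{k}b)^{T}x(0)=0$ for $k=0,\dots,n-1$, so $x(0)=0$ by controllability of $(A,b)$. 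Hence the origin is GAS for \emph{every} $\beta>0$.

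For the derivative bounds, note first that $\nu$ is $C^{\infty}$ since $1+\norme{x}^{2}\ge 1$, so $U(t):=\nu(x(t))$ is smooth and $U^{(j)}$ results from applying $j$ times to $\nu$ the operator $\mathcal{L}f:=\nabla f\cdot Ax+(\nabla f\cdot b)\,\nu$. I would introduce the class $\mathcal{G}$ of all finite linear combinations of functions
\begin{equation*}
x\ \longmapsto\ \frac{(v_{1}^{T}x)\cdots(v_{s}^{T}x)}{(1+\norme{x}^{2})^{\alpha a+c}},\qquad a\in\mathbb{N}_{\geq 1},\ c\in\mathbb{N},\ s\le a+2c,\ v_{1},\dots,v_{s}\in\reels^{n},
\end{equation*}
and observe that, because $\alpha\ge 1/2$, one has $(1+\norme{x}^{2})^{\alpha a+c}\ge(1+\norme{x}^{2})^{(a+2c)/2}\ge(1+\norme{x}^{2})^{s/2}$, so every element of $\mathcal{G}$ is bounded on $\reels^{n}$ by a constant depending only on $a$, $c$ and the $\norme{v_{i}}$. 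Now $\nu=-\beta\phi$ with $\phi\in\mathcal{G}$ (parameters $a=1$, $c=0$, $s=1$, $v_{1}=b$), and the core claim is that $\mathcal{L}$ maps $\mathcal{G}$ into $\mathcal{G}+\beta\,\mathcal{G}$: in $\nabla f\cdot Ax$ the term obtained by differentiating the denominator carries the factor $x^{T}Ax=0$ and vanishes, so this piece stays in $\mathcal{G}$ with the same $a$, $c$, $s$ (each factor $v_{i}^{T}x$ turning into $(A^{T}v_{i})^{T}x$); and $\nabla f\cdot b$ is a combination of functions of the above form with parameters $(a,c,s-1)$ and $(a,c+1,s+1)$, both still admissible, so that after multiplication by $\nu\in\beta\,\mathcal{G}$ it belongs to $\beta\,\mathcal{G}$, the bound $s\le a+2c$ persisting at each step. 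By induction, $U^{(j)}=\sum_{m=1}^{j+1}\beta^{m}g_{j,m}(x(t))$ with $g_{j,m}\in\mathcal{G}$, hence $\sup_{t\ge 0}\abs{U^{(j)}(t)}\le\sum_{m=1}^{j+1}\beta^{m}C_{j,m}$ for constants $C_{j,m}$ depending only on $A$, $b$, $\alpha$, $j$, $n$ and \emph{not} on the initial condition. Choosing $\beta$ small enough, in terms of these constants and of $R_{0},\dots,R_{p}$, then yields $\abs{U^{(j)}(t)}\le R_{j}$ for every $j\in\llbracket 0,p\rrbracket$ and every $t\ge 0$, so that $\nu$ is a $p$-bounded feedback by $(R_{j})_{0\le j\le p}$.

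The step I expect to be the main obstacle is the combinatorial bookkeeping behind this induction: one must verify that the degree condition $s\le a+2c$ is \emph{exactly} preserved by both pieces of $\mathcal{L}$ and by the products with $\phi$, which is the precise place where the hypothesis $\alpha\ge 1/2$ is used, since it is what makes every element of $\mathcal{G}$ bounded. The skew-symmetry of $A$ is used crucially twice: to annihilate $x^{T}Ax$ in the differentiation step, and in the LaSalle argument.
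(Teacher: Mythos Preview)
Your argument is correct on both fronts, but it proceeds along genuinely different lines from the paper.

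For global asymptotic stability, the paper constructs a \emph{strict} Lyapunov function $V(x)=x^{T}Px+K\big((1+\norme{x}^{2})^{\alpha+1}-1\big)$, where $P$ solves the Lyapunov equation for $A_{\beta}=A-\beta bb^{T}$, and obtains $\dot V\le -\norme{x}^{2}/2$. Your route via $V(x)=\norme{x}^{2}$ and LaSalle is simpler and uses skew-symmetry more transparently (through $x^{T}Ax=0$ and $b^{T}A^{k}=(-1)^{k}(A^{k}b)^{T}$); the paper's route, on the other hand, yields a quantitative decay rate and avoids the invariance-principle machinery.

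For the derivative bounds, the paper derives explicit formulas for $x^{(k)}$, $G^{(k)}=(1+\norme{x}^{2})^{(k)}$ and $U^{(k)}$ via the Leibniz rule and Fa\`a di Bruno's formula, and then tracks polynomial degrees step by step: $x^{(k)}$ is bounded by a degree-$1$ polynomial in $\abs{x_{i}}$, $G^{(k)}$ by a degree-$2$ polynomial, and the Bell-polynomial contribution by degree $2a+1$, which is dominated by $(1+\norme{x}^{2})^{\alpha+a}$ precisely when $\alpha\ge 1/2$. Your class $\mathcal{G}$ packages this same degree-count inside the structural constraint $s\le a+2c$ and checks once and for all that the Lie derivative and multiplication by $\phi$ respect it; this replaces the explicit Fa\`a di Bruno expansion by a one-line closure argument. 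Both proofs pivot on the identical inequality $2(\alpha a+c)\ge a+2c\ge s$ and on $x^{T}Ax=0$ killing the unwanted numerator growth. The paper's computation is more explicit and could in principle be turned into concrete numerical bounds on $\beta$, whereas your argument is more conceptual and easier to verify; the ``bookkeeping'' you flag as the main obstacle is indeed the heart of the matter, and your description of it is accurate.
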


The proof of this theorem is given in Section \ref{subsec:th2}. 

\begin{remark}\label{rem:3}
Unlike for multiple integrators (see Remark \ref{rem:0}), the magnitude of the proposed feedback is not independently of the amplitude of its $p$ first derivatives.
\end{remark}
\section{Proof of main results}\label{sec: proof}

\subsection{Multiple integrators}\label{sec:proof_main_result}
\label{sec:proof_th}

We start by estimating upper bounds on composed saturation functions of the class $\scal(p)$. This estimate, presented in Lemma \ref{lem:techn}, relies on Fa\`a di Bruno's formula recalled below.

\begin{lem}[Fa\`a Di Bruno's formula, \cite{fdb}, p. 96]
\label{lem:fa_di}
Given  $k\in\mathbb N$, let $\phi\in C^{k}( \reels_{\geq 0} , \reels )$ and $\rho\in C^{k}( \reels , \reels )$. Then the $k$-th order derivative of the composite function $\rho \circ \phi$ is given by
\begin{equation}
\label{eq:faadibruno}
[ \rho \circ \phi ]^{(k)}(t) =  \sum\limits_{a=1}^k \rho^{(a)} (\phi(t)) B_{k,a}\Big(\phi^{(1)}(t), \ldots , \phi^{(k-a+1)}(t)\Big),
\end{equation}
where $B_{k,a}$ is the Bell polynomial given by
\begin{align}
\label{bell}
B_{k,a}\Big(\phi^{(1)}(t), \ldots ,  \phi^{(k-a+1)}(t)\Big)\hspace{-1mm}:=\hspace{-2mm}\sum\limits_{\delta \in \pcal_{k,a}} \hspace{-1mm}c_{\delta} \hspace{-1mm}\prod\limits_{l=1}^{k-a+1} \left( \phi^{(l)}(t) \right)^{\delta_l}
\end{align}
where $\pcal_{k,a}$ denotes the set of $(k-a+1)-$tuples $\delta :=(\delta_1 , \delta_2, \ldots , \delta_{k-a+1})$  of positive integers satisfying $\delta_1 + \delta_2 + \ldots +\delta_{k-a+1} = a$ and $\delta_1 +2 \delta_2 + \ldots + (k-a+1) \delta_{k-a+1} = k$, and $c_{\delta}:=k!/\left(\delta_1 ! \cdots \delta_{k-a+1}! (1!)^{\delta_1} \cdots ((k-a+1)!)^{\delta_{k-a+1}}\right)$.
\end{lem}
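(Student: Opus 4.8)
The plan is to prove \rref{eq:faadibruno} by induction on the order $k$, reducing the inductive step to the classical recurrence satisfied by the Bell polynomials $B_{k,a}$. For the base case $k=1$, the only admissible index is $a=1$, for which $\pcal_{1,1}$ contains the single tuple $\delta=(1)$ with $c_{\delta}=1$, so that $B_{1,1}(\phi^{(1)}(t))=\phi^{(1)}(t)$; the identity $[\rho\circ\phi]^{(1)}(t)=\rho^{(1)}(\phi(t))\phi^{(1)}(t)$ is then simply the ordinary chain rule, valid since both functions are $C^1$.

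For the inductive step, I would assume that \rref{eq:faadibruno} holds at order $k$ for every pair of $C^{k}$ functions and differentiate its right-hand side term by term, applying the product rule to each summand $\rho^{(a)}(\phi(t))B_{k,a}$. Differentiating the outer factor $\rho^{(a)}(\phi(t))$ produces, by the chain rule, $\rho^{(a+1)}(\phi(t))\phi^{(1)}(t)B_{k,a}$; after the re-indexing $a\mapsto a-1$ these contribute $\rho^{(a)}(\phi(t))\phi^{(1)}(t)B_{k,a-1}$ to the coefficient of $\rho^{(a)}(\phi(t))$. Differentiating the Bell polynomial $B_{k,a}$, viewed as a polynomial in the $t$-dependent variables $\phi^{(1)}(t),\ldots,\phi^{(k-a+1)}(t)$, produces by the multivariate chain rule the term $\rho^{(a)}(\phi(t))\sum_{l=1}^{k-a+1}(\partial B_{k,a}/\partial\phi^{(l)})\,\phi^{(l+1)}(t)$. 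Collecting the coefficient of $\rho^{(a)}(\phi(t))$ for each $a\in\llbracket 1,k+1\rrbracket$ (with the conventions $B_{k,0}=0$ and $B_{k,k+1}=0$), the whole step reduces to establishing the Bell polynomial recurrence
\[
B_{k+1,a}=\phi^{(1)}B_{k,a-1}+\sum_{l=1}^{k-a+1}\frac{\partial B_{k,a}}{\partial\phi^{(l)}}\,\phi^{(l+1)}.
\]

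I expect this combinatorial identity to be the main obstacle. Concretely, it must be checked that the two operations just described, namely multiplication by $\phi^{(1)}$ after lowering the number of blocks and the shift operator $\sum_{l}\phi^{(l+1)}\partial_{\phi^{(l)}}$, map the monomials indexed by $\pcal_{k,a-1}$ and $\pcal_{k,a}$ exactly onto those indexed by $\pcal_{k+1,a}$, and that the multinomial coefficients $c_{\delta}$ defined after \rref{bell} add up correctly: each $\delta\in\pcal_{k+1,a}$ arises either by raising some $\delta_l$ in a tuple of $\pcal_{k,a}$, the shift contributing a factor $\delta_l$ that is matched against the ratio of the relevant $c$'s, or by the $\phi^{(1)}$-multiplication of a tuple in $\pcal_{k,a-1}$, and summing the resulting coefficients reproduces $c_{\delta}$. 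An alternative that sidesteps this bookkeeping is the set-partition proof: one shows directly that $[\rho\circ\phi]^{(k)}$ equals the sum over all partitions of $\llbracket 1,k\rrbracket$ of $\rho^{(\#\text{blocks})}(\phi)$ times the product of $\phi^{(\#\text{block})}$ over the blocks, and then groups partitions by block-size type to recover $B_{k,a}$ together with the coefficients $c_{\delta}$. Either way, once the recurrence is in hand, substituting it back collapses the two families of terms into $\sum_{a=1}^{k+1}\rho^{(a)}(\phi(t))B_{k+1,a}(\phi^{(1)}(t),\ldots,\phi^{(k-a+2)}(t))$, which closes the induction.
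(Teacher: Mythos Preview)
Your inductive argument is sound and is indeed one of the standard ways of establishing Fa\`a di Bruno's formula; the recurrence you isolate for the partial Bell polynomials is correct and the bookkeeping you describe can be carried out as stated. However, there is nothing to compare against here: the paper does not give its own proof of Lemma~\ref{lem:fa_di} but simply quotes it as a classical result from \cite{fdb}, p.~96, and uses it as a black box in the proofs of Lemma~\ref{lem:techn} and Assertion~\ref{assert2}.
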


\begin{remark} 
\label{rem:2}
We stress that the Bell polynomial $B_{k,a}$ is of (homogeneous) degree $a$ w.r.t the $(k-a+1)$-dimensional vector representing the argument of $B_{k,a}$.
\end{remark}

The proof of Theorem \ref{Main_res} extensively relies on the following upper bound on composition of functions of $\scal (p)$, which exploits their constant value in their saturation region.

\begin{lem}
\label{lem:techn}
Given $k\in\mathbb N$, let $f$ and $g$ be functions of class $C^k(\reels_{\geq 0} , \reels)$, $\sigma$ be a saturation function in $\scal (k)$ with constants ($\sigma^{max}, L, S ,\alpha$), and $E$ and $F$ be subsets of $\reels_{\geq 0}$ such that $E \subseteq F$. Assume that
\begin{eqnarray}
\label{eq:lem:tech:absf}
& \: \abs{f(t)} > S ,& \:  \quad  \forall t \in F \backslash E,
\end{eqnarray} 
and that there exist positive constants $M, Q_1, \ldots , Q_k$ such that
\begin{equation}
\label{eq:lem:tech:fkQ}
\abs{g^{(k)}(t)} \leq M, \:  \forall t \in F \quad \text{and} \quad \abs{f^{(a)}(t)} \leq Q_{a}, \: \forall t \in E, \: \forall  a \in \llbracket 1 , k \rrbracket.
\end{equation}

Set $\overline{\sigma}_a := \max \lbrace |\sigma^{(a)}(s)| : s \in \reels \rbrace$ for each $a\in \llbracket 1 , k \rrbracket$. Then the $k$th-order derivative of the function  $h : \reels_{\geq 0} \rightarrow \reels$ defined by $h(\cdot) := g(\cdot) + \sigma ( f(\cdot))$, satisfies
\begin{equation}
\label{lem:eq:est}
\abs{h^{(k)}(t)} \leq M + \sum\limits_{a=1}^{k}  \overline{\sigma}_a B_{k,a}(Q_1, \ldots , Q_{k -a+1})  ,  \quad \forall  t   \in F . 
\end{equation}
\end{lem}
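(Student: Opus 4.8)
The key idea is to split the analysis of $h^{(k)}(t) = g^{(k)}(t) + [\sigma \circ f]^{(k)}(t)$ according to whether $t$ lies in $E$ or in $F \backslash E$, and to bound the two pieces separately using the two distinctive features of an element of $\scal(k)$: it is \emph{constant} for large arguments, and its derivatives are \emph{bounded} everywhere.

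\smallskip
\textbf{Step 1: the region $F\backslash E$.} For $t\in F\backslash E$ we have $\abs{f(t)} > S$ by \eqref{eq:lem:tech:absf}, so $f(t)$ lies in the saturation region of $\sigma$. Since $\sigma$ is constant (equal to $\pm\sigma^{max}$) on $\{|r|\ge S\}$, all its derivatives $\sigma^{(a)}$, $a\ge 1$, vanish on an open neighbourhood of $f(t)$. Applying Fa\`a di Bruno's formula \eqref{eq:faadibruno}, every term $\rho^{(a)}(\phi(t))B_{k,a}(\dots)$ with $\rho=\sigma$, $\phi=f$ contains a factor $\sigma^{(a)}(f(t))=0$, so $[\sigma\circ f]^{(k)}(t)=0$. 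Hence $\abs{h^{(k)}(t)} = \abs{g^{(k)}(t)}\le M$ on $F\backslash E$, which is trivially dominated by the right-hand side of \eqref{lem:eq:est} since all the $\overline\sigma_a B_{k,a}(Q_1,\dots)$ terms are nonnegative.

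\smallskip
\textbf{Step 2: the region $E$.} For $t\in E$ we use the second half of \eqref{eq:lem:tech:fkQ}: $\abs{f^{(a)}(t)}\le Q_a$ for $a\in\llbracket 1,k\rrbracket$. Apply Fa\`a di Bruno \eqref{eq:faadibruno} again to $[\sigma\circ f]^{(k)}(t)$. Bounding $\abs{\sigma^{(a)}(f(t))}\le\overline\sigma_a$ and, for each fixed $a$, bounding the Bell polynomial $B_{k,a}(f^{(1)}(t),\dots,f^{(k-a+1)}(t))$ in absolute value: since $B_{k,a}$ is a polynomial with \emph{nonnegative} coefficients $c_\delta$ (see \eqref{bell}) that is homogeneous of degree $a$ (Remark \ref{rem:2}), replacing each argument $\abs{f^{(l)}(t)}$ by its upper bound $Q_l$ gives $\abs{B_{k,a}(f^{(1)}(t),\dots)}\le B_{k,a}(Q_1,\dots,Q_{k-a+1})$. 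Summing over $a$ from $1$ to $k$ yields $\abs{[\sigma\circ f]^{(k)}(t)}\le\sum_{a=1}^k\overline\sigma_a B_{k,a}(Q_1,\dots,Q_{k-a+1})$, and adding $\abs{g^{(k)}(t)}\le M$ (valid on all of $F\supseteq E$) gives exactly \eqref{lem:eq:est} on $E$.

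\smallskip
\textbf{Conclusion and main obstacle.} Combining Steps 1 and 2 covers all of $F=E\cup(F\backslash E)$, proving \eqref{lem:eq:est}. The only genuinely delicate point is the triangle-inequality bound on the Bell polynomial: one must check that its coefficients $c_\delta$ are indeed nonnegative (immediate from their factorial expression) and invoke homogeneity so that monomial-by-monomial domination $\prod_l\abs{f^{(l)}}^{\delta_l}\le\prod_l Q_l^{\delta_l}$ aggregates into $\abs{B_{k,a}(\cdot)}\le B_{k,a}(Q_1,\dots)$; everything else is a direct consequence of $\sigma$ being locally constant in its saturation zone and of $g$ satisfying a uniform $k$-th derivative bound on $F$. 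Note that the hypotheses about $\sigma$ being linear near zero, and the constants $L,S,\alpha$, play no role here beyond guaranteeing $\sigma\in\scal(k)$ so that $\overline\sigma_a<\infty$.
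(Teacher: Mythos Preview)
Your proposal is correct and follows essentially the same approach as the paper: split $F$ into $E$ and $F\setminus E$, use the constancy of $\sigma$ in its saturation zone to kill $[\sigma\circ f]^{(k)}$ on $F\setminus E$, and on $E$ apply Fa\`a di Bruno together with the nonnegativity of the Bell-polynomial coefficients to bound $\abs{B_{k,a}(f^{(1)},\dots)}$ by $B_{k,a}(Q_1,\dots,Q_{k-a+1})$. The only difference is cosmetic: the paper writes the Fa\`a di Bruno expansion once for all $t$ and then argues each region, whereas you treat the two regions separately from the start.
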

\begin{proof}[Proof of Lemma \ref{lem:techn}]
Using Lemma \ref{lem:fa_di}, a straightforward computation yields
\begin{equation*}
h^{(k)}(t) = g^{(k)}(t) + \sum_{a=1}^{k} \sigma^{(a)} (f(t))  B_{k,a}\Big(f^{(1)}(t), \ldots , f^{(k-a+1)}(t)\Big), \quad \forall t \geq 0,
\end{equation*} 
where the polynomials $B_{k,a}$ are defined in \rref{bell}. Since $\sigma \in \scal(k)$, \rref{eq:lem:tech:absf} ensures that the set $F\setminus E$ is contained in the saturation zone of $\sigma$. It follows that
\begin{equation}
\label{proof:lem:tech:int}
[ \sigma \circ f ]^{(k)}(t)=0,\quad \forall t\in F\setminus E.
\end{equation}
Furthermore, from \rref{eq:lem:tech:fkQ} and \rref{bell} it holds that, for all $t \in E$,
\begin{equation}
\abs{B_{k,a}\left( f^{(1)}(t), \ldots , f^{(k -a+1)}(t)\right)}
 \leq \sum_{\delta \in \pcal_{k,a}} c_{\delta} \prod\limits_{l=1}^{k-a+1} Q_l^{\delta_l} =B_{k,a}(Q_1, \ldots ,Q_{k-a+1}).
\end{equation}
One has $[ \sigma \circ f ]^{(k)}(t) \leq  \sum_{a=1}^{k} \overline{\sigma}_a B_{k,a}(Q_1, \ldots ,Q_{k -a+1})$, for all $ t \in E$, from the definition of $\overline{\sigma}_a $ and \rref{eq:faadibruno}. In view of \rref{proof:lem:tech:int}, the above estimate is valid on the whole set $F$. Thanks to \rref{eq:lem:tech:fkQ}, a straightforward computation leads to the estimate \rref{lem:eq:est}.
\end{proof}


We now turn to the proof of Theorem \ref{Main_res}. We explicitly construct the vectors $k_1, \ldots , k_n $ and the constants $a_1 , \ldots , a_n$ guaranteeing global asymptotic stability with a bounded feedback law whose successive derivatives remain below prescribed bounds at all times. Given $p\in\mathbb N$ and $n\in\mathbb N_{\geq 1}$, let $ \sigma_i $ be saturation functions in $\scal (p)$ with constants $( \sigma_i^{max} , L_{\sigma_i} , S_{\sigma_i}, \alpha_{\sigma_i} )$ for each $i \in \llbracket 1 , n \rrbracket$, and let $(R_j)_{0 \leq j \leq p}$ be the family of prescribed positive bounds on the amplitude and the successive time derivatives of the control signal. We first construct a $p$ bounded feedback by $(R_j)_{0 \leq j \leq p}$ for System \rref{mult_int}. Then we show that the origin of the closed loop system \rref{mult_int} with this feedback law is GAS.

Let $(\mu_i^{max})_{1 \leq i \leq n-1}$ and $(L_{\mu_i})_{1 \leq i \leq n-1}$ be two families of positive constants such that $\mu_{n-1}^{max} <  1/2$, $L_{\mu_{n-1}}  =    \mu_{n-1}^{max} L_{\sigma_{n-1}} \alpha_{\sigma_{n-1}}/  \sigma_{n-1}^{max}  $, and, for each $i \in \llbracket 1, n-2 \rrbracket$, $\mu_i^{max} <   L_{\mu_{i+1}} / 2 $, $L_{\mu_{i}}  =    \mu_{i}^{max} L_{\sigma_{i}} \alpha_{\sigma_{i}} / \sigma_{i}^{max}  $. For each  $i \in \llbracket 1, n-1 \rrbracket$, we define saturation function $\mu_i \in \scal (p)$ with constants ($\mu_i^{max}$, $ L_{\mu_i} $, $S_{\mu_i}$, $1 $), where $ S_{\mu_i} =   S_{\sigma_i}  L_{\mu_i}/ L_{\sigma_i}  $, as follows $\mu_i (s) :=  \mu_i^{max}   \sigma_i ( s  L_{\sigma_i} /L_{\mu_i}   )/  \sigma_i^{max}$, for all $s \in \reels$. For $\lambda \geq 1$, to be chosen later, we define the saturation function $\mu_n$ as $\mu_n (s)  := R_0 \sigma_n (s L_{ \sigma_n} / \lambda   ) / \sigma_n^{max}$, for all $s \in \reels$, with $\mu_n^{max}=R_0$,  $L_{\mu_n}=\lambda $, $S_{\mu_n}= S_{\sigma_n}  \lambda / L_{\sigma_n}$, and $\alpha_{\mu_n}:=\alpha_{\tilde{\mu}} / \lambda$ with $\alpha_{\tilde{\mu}}  := R_0 L_{\sigma_n} \alpha_{\sigma_n} / \sigma_n^{max} $.

We next make a linear change of coordinates $y = H x $, with $H\in \reels^{n,n}$, that puts System \rref{mult_int} into the form
\begin{equation}
\label{S1}
\dot{y}_i  =  \alpha_{\tilde{\mu}} / \lambda  \sum_{l=i+1}^n y_l + u, \quad \forall i \in \llbracket 1 , n \rrbracket.
\end{equation}
The relations $ y_{n-i}=\sum_{k=0}^i \binom{i}{k}  \left( \alpha_{\tilde{\mu}} / \lambda  \right)^k  x_{n-k}$, for $i \in \llbracket 0, n-1 \rrbracket$, enable us to determine $H$. Since $H$ is triangular with non zero elements on the main diagonal, it is invertible. We define a nested saturations feedback law $\Upsilon : \reels^n \rightarrow \reels $ as
\begin{equation}
\label{fe:prop:S_1}
\Upsilon(y)=- \mu_{n} (y_n + \mu_{n-1}(y_{n-1}+ \ldots + \mu_1(y_1))\ldots ). 
\end{equation}
Note that, in the original $x$-coordinates, this feedback law reads $\Upsilon(y)=\Upsilon(Hx)$ therefore the bounds of the successive time derivatives of $\Upsilon(y)$ coincide with those of $\Upsilon (Hx)$. The global stabilization of \rref{S1} with a $p$-bounded feedback law by $(R_j)_{0\leq j\leq p}$ is thus equivalent to that of the original system \rref{mult_int}. So, from now on, we will rely on this expression. Let $y(\cdot)$ be a trajectory of the system
\begin{equation}
\label{S1cc}
\dot{y}_i  =  \alpha_{\tilde{\mu}} / \lambda  \sum_{l=i+1}^n y_l + \Upsilon(y), \quad \forall i \in \llbracket 1 , n \rrbracket ,
\end{equation} which is the closed-loop system \rref{S1} with the feedback defined in \rref{fe:prop:S_1}. For each $i\in  \llbracket 1 , n\rrbracket$, let $z_i : \reels_{\geq 0} \rightarrow \reels$ be the time function defined recursively as $z_i(\cdot ): = y_i(\cdot) + \mu_{i-1}(z_{i-1}(\cdot))$ , with $\mu_0(\cdot)= 0$. With the above functions, the closed-loop system \rref{S1cc} can be rewritten as
\begin{equation}
\begin{cases}
\label{S1c2}
\dot{y}_i  = \alpha_{\tilde{\mu}} / \lambda z_n - \mu_{n} (z_n) + \alpha_{\tilde{\mu}} / \lambda (  \sum\limits_{l=i+1}^{n-1}\gamma_l(z_l) - \mu_i(z_i)), \: \forall i \in \llbracket 1 , n-1 \rrbracket ,\\
\dot{y}_n  =  -  \mu_{n} (z_n),
\end{cases}
\end{equation} where $\gamma_l(z_l)= z_l - \mu_{l} (z_l)$ for all $l \in \llbracket 1 , n-1 \rrbracket$.
For each $i\in  \llbracket 1 , n\rrbracket$, we also let $E_i  :=  \left\lbrace y \in \reels^{n} : \:  \abs{y_v } \leq S_{\mu_v} + \mu_{v-1}^{max}, \forall v \in \llbracket i, n \rrbracket  \right\rbrace $, and $I_i :=   \lbrace t \in \reels_{\geq 0} : \: y(t) \in E_i \rbrace $. Note that from the definitions of $I_i$ and $E_i$, we have $I_{1} \subseteq I_{2} \subseteq \ldots \subseteq I_n$, and a straightforward computation yields
\begin{align}
\abs{z_{i}(t)} &> S_{\mu_{i}} , \quad \forall t \in I_{i+1} \backslash I_{i}, \:   \forall i \in \llbracket 1 , n-1 \rrbracket , \label{eq:s_i_I_i_prive_I_i+1}\\
\abs{z_n(t)} &> S_{\mu_{n}} , \quad \forall t \in \reels_{\geq 0} \backslash I_{n}, \label{eq:s_n_rn}
\end{align}which allows to determine when saturation occurs. We define $b_{\mu_i}  :=  \max \lbrace \abs{r - \mu_i(r)} : \:\abs{r} \leq S_{\mu_i} + 2 \mu_{i-1}^{max}  \rbrace $ and $\overline{\mu}_{i,j}  :=  \max  \lbrace \abs{\mu_i^{(j)}(r)}: \:  r \in  \reels  \rbrace$ for each $i \in \llbracket 1 , n-1 \rrbracket $ and $j \in \llbracket 1 ,p \rrbracket $. Note that these quantities are well defined since the functions $\mu_i$ are all in $\scal (p)$. We can now establish that  
\begin{align}\label{eq:si-sig}
\abs{z_i(t) - \mu_i(z_i(t))} &\leq b_{\mu_i} , \quad \forall t \in I_i, \quad \forall i \in \llbracket 1 , n-1 \rrbracket.
\end{align}Set $\underline{b}_{\sigma_n} := \min \lbrace \sigma_n(r)/r : \: 0 < \abs{r} \leq S_{\sigma_n}  + 2 \mu_{n-1}^{max} L_{\sigma_n}  \rbrace$, $\overline{b}_{\sigma_n} := \max \lbrace \sigma_n(r)/r : \: \forall r > 0  \rbrace$, and $\Delta=(\overline{b}_{\sigma_n} - \underline{b}_{\sigma_n})(L_{\sigma_n} R_0 / \sigma_n^{max}) $. It then can been seen that
\begin{equation}\label{eq:sn-sig}
\abs{\alpha_{\mu_n} z_n(t)-\mu_n (z_n(t))} \leq \Delta ( S_{\sigma_n} / L_{\sigma_n} + 2 \mu_{n-1}^{max} / \lambda ), \quad \forall t \in I_n.
\end{equation}

The following statement provides explicit bounds on the successive derivatives of each functions $y_i(t)$, $z_i(t)$ for each $i  \in \llbracket 1, n \rrbracket$ and the control input given by $U(\cdot):=\Upsilon(y(\cdot))$.
\begin{assert}
\label{prop:2}
With the notation introduced previously and the Bell polynomials introduced in \eqref{bell}, every trajectory of the closed-loop system \rref{S1cc} satisfies, for each $i\in\llbracket 1,n\rrbracket$ and each $j \in \llbracket 1, p \rrbracket $,
\begin{align}
(P_1(i,j)) : \: & \: \abs{y_i^{(j)}(t)} \leq Y_{i,j} ,\: \forall t \in I_i \,; \quad (P_2(i,j)) :  \: \abs{z_i^{(j)}(t)} \leq  Z_{i,j} , \: \forall t \in I_i ,   \label{y_j_est}\\ 
(P_3(j)) :\quad & \:\sup \lbrace \abs{U^{(j)}(t)} : t\geq 0 \rbrace   \leq   \sum\limits_{q=1}^j  G_{q,j} \overline{\mu}_{n,q} \,;\label{sup_u_j_est}
\end{align}
where $\overline{\mu}_{n,q}= \max \lbrace \abs{\mu_n^{(q)}(r)}: \:  r \in  \reels  \rbrace $, $Y_{i,j}$, $Z_{i,j}$, and $G_{q,j}$ are independent of initial conditions and are obtained recursively as follows: for $j=1$, $Y_{i,1} :=  \Delta ( S_{\sigma_n} / L_{\sigma_n} + 2 \mu_{n-1}^{max} / \lambda ) +  \alpha_{\tilde{\mu}} / \lambda  ( \sum_{l=i+1}^{n-1} b_{\mu_l} +  \mu_i^{max})$ for $i \in \llbracket 1, n-1 \rrbracket $, $Y_{n,1} := \mu_n^{max}$, $Z_{1,1}  :=  Y_{1,1}$, $Z_{i,1}  :=  Y_{i,1} + \overline{\mu}_{i-1,1} Z_{i-1,1}$ for $i \in \llbracket 2, n \rrbracket$, and $G_{1,1}  := Z_{n,1}$. For each $j\in\llbracket 2, p\rrbracket $, the inductive relations are given by $Y_{i,j} :=  \alpha_{\tilde{\mu}} / \lambda \sum_{b=i+1}^{n} Y_{b,j-1} + \sum_{q=1}^{j-1} G_{q,j-1} \overline{\mu}_{n,q}$ for $i \in \llbracket 1, n \rrbracket $, $Z_{1,j}  :=  Y_{1,j}$, $Z_{i,j}  :=  Y_{i,j} + \sum_{a=1}^j \overline{\mu}_{i-1,a} B_{j,a}(Z_{i-1,1}, \ldots  , Z_{i-1,j-1+a})$, for  $i \in \llbracket 2, n \rrbracket $, and $G_{q,j} := B_{j,q}(Z_{n,1}, \ldots , Z_{n,j-q+1})$, for $q \in \llbracket 1, j \rrbracket $.

\end{assert}
\begin{proof}[Proof of Assertion \ref{prop:2}]

The right-hand side of System \rref{S1cc} is globally Lipschitz and of class $C^p( \reels^n , \reels^n )$, and therefore it is forward complete with trajectories of class $C^{p+1}( \reels_{\geq 0} , \reels^n )$. We establish the result by induction on $j$. We start by $j=1$. We begin to prove that $P_1(i,1)$  holds for all $i \in \llbracket 1, n \rrbracket $. Let $i \in \llbracket 1, n-1 \rrbracket $. From \rref{S1c2}, \rref{eq:si-sig}, and \rref{eq:sn-sig} a straightforward computation leads to $\abs{ \dot{y}_i(t) } \leq   \Delta ( S_{\sigma_n} / L_{\sigma_n} + 2 \mu_{n-1}^{max} / \lambda ) +  \alpha_{\tilde{\mu}} / \lambda  ( \sum_{l=i+1}^{n-1} b_{\mu_l} +  \mu_i^{max})$, for all $t \in I_{i+1}$. Since $I_{i} \subseteq I_{i+1}$, the above estimate is still true on $I_i$. Moreover, from \rref{S1c2} it holds that $\abs{ \dot{y}_n(t) } \leq \mu_n^{max}$ at all positive time. Thus, $P_1(i,1)$ has been proven for each $i \in \llbracket 1, n \rrbracket $. 

We now prove by induction on $i$ the statement $P_2(i,1)$. Since $z_1(\cdot )=y_1(\cdot )$, the case $i=1$ is done. Assume that, for $i \in \llbracket 1, n-1 \rrbracket $, $P_2(i, 1)$ holds. From Lemma \ref{lem:techn} (with $k=1$, $f(\cdot)=z_i(\cdot)$, $g(\cdot)=y_{i+1}(\cdot)$, $h(\cdot)=z_{i+1}(\cdot)$, $\sigma(\cdot) = \mu_i(\cdot)$, $Q_1= Z_{i,1}$, $M = Y_{i+1,1}$, $\overline{\sigma}_1 = \overline{\mu}_{i,1}$, $E=I_i$, $F=I_{i+1}$, and \rref{eq:s_i_I_i_prive_I_i+1}), we can establish that $P_2(i+1,1)$ holds. Thus $P_2(i,1)$ holds for all $i \in \llbracket 1, n \rrbracket $.

Notice that the applied control input reads $U(\cdot)=-\mu_n(z_{n}(\cdot))$. We then can establish $P_3(1)$ from Lemma \ref{lem:techn} (with $k=1$, $f(\cdot)=z_n(\cdot)$, $g(\cdot)\equiv 0$, $h(\cdot)=U(\cdot)$, $\sigma(\cdot) = \mu_n(\cdot)$, $Q_1= Z_{n,1}$, $M = 0$, $\overline{\sigma}_1 = \overline{\mu}_{n,1}$, $E=I_n$, $F=\reels_{\geq 0}$ and \rref{eq:s_n_rn}). This ends the case $j=1$.

Now, assume that for a given $j \in \llbracket 1, p-1 \rrbracket$, statements $P_1(i,j_2)$, $P_2(i,j_2)$ and $P_3(j_2)$ hold for all $j_2 \leq j$ and all $i \in \llbracket 1, n \rrbracket $. Let $i \in \llbracket 1, n \rrbracket $. From \rref{S1}, a straightforward computation yields $\abs{ y_i^{(j+1)}(t) }  \leq  \alpha_{\tilde{\mu}} / \lambda  \sum_{l=i+1}^{n} \abs{ y_l^{(j)}(t) } + \abs{u^{(j)}(t)}$, for all $ t \geq 0.$
Using $P_3(j)$, $P_1(i+1,j), \ldots , P_1(n,j) $, we obtain that $\abs{ y_i^{(j+1)}(t) }  \leq  \alpha_{\tilde{\mu}} / \lambda  \sum_{l=i+1}^{n} Y_{l,j} + \sum_{q=1}^j  G_{q,j}\overline{\mu}_{n,q}$ , for all $l t \geq I_i.$ Thus the statement $P_1(j+1,i)$ is proven for all $i \in \llbracket 1, n \rrbracket $.

We now prove by induction on $i$ the statement $P_2(i,j+1)$. As before, since $z_1(\cdot)=y_1(\cdot)$, the case for $i=1$ is done. Assume that for a given $i \in \llbracket 1, n-1 \rrbracket$, the statement $P_2(i, j+1)$ holds. From Lemma \ref{lem:techn} (with $k=j+1$, $f(\cdot)=z_i(\cdot)$, $g(\cdot)=y_{i+1}(\cdot)$, $h(\cdot)=z_{i+1}(\cdot)$, $\sigma(\cdot) = \mu_i(\cdot)$, $Q_{k_1}= Z_{i,k_1}$, $M = Y_{i+1,j+1}$, $\overline{\sigma}_a = \overline{\mu}_{i,a}$, $E=I_i$, $F=I_{i+1}$, and \rref{eq:s_i_I_i_prive_I_i+1}), we can establish that $P_2(i+1,j+1)$ holds. $P_2(i,j+1)$ is thus satisfied for all $i \in \llbracket 1, n \rrbracket $.

Finally, we can establish $P_3(j+1)$ from Lemma \ref{lem:techn} (with $k=j+1$, $f(\cdot)=z_n(\cdot)$, $g(\cdot)\equiv 0$, $h(\cdot)=U(\cdot)$, $\sigma(\cdot) = \mu_n(\cdot)$, $Q_{k1}= Z_{n,k_1}$, $M = 0$, $\overline{\sigma}_a = \overline{\mu}_{n,a}$, $E=I_n$, $F=\reels_{\geq 0}$ and \rref{eq:s_n_rn}). This ends the proof of Assertion \ref{prop:2}.
\end{proof}

First notice that $\overline{\mu}_{n,q} = \tilde{\mu}_{n,q} / \lambda^q$ with $\tilde{\mu}_{n,q}:=  R_0  \overline{\sigma}_{n,q} (L_{\sigma_n})^q / \sigma_n^{max} $  and $\overline{\sigma}_{n,q} := \max \lbrace \abs{\sigma^{(j)}(r)}: \:  r \in  \reels  \rbrace$ for $q \in \llbracket 1 , p \rrbracket $. In consequence, using Assertion \ref{prop:2}, it can be seen that, for each  $j\in \llbracket  1,p \rrbracket $, $\sum_{q=1}^j   G_{q,j} \tilde{\mu}_{n,q} / \lambda^q = P( 1/ \lambda )/\lambda$ where $P$ is a polynomial with positive coefficients. This sum is thus decreasing in $\lambda$. Hence, we can pick $\lambda\geq 1$ in such a way that $\sum_{q=1}^j   G_{q,j} \tilde{\mu}_{n,q}/ \lambda^q  \leq \underline{R}$, for each $ j \in \llbracket 1, p \rrbracket$ with $\underline{R}:= \min \lbrace R_1 , \ldots , R_p \rbrace $. It follows that, for each  $j\in\llbracket1 , p\rrbracket$, $\sup \lbrace \abs{U^{(j)}(t)} : t\geq 0 \rbrace \leq   \underline{R} \leq R_j$. By recalling that the feedback $\Upsilon$ is bounded by $R_0$, we conclude that it is $p$-bounded feedback law by $(R_j)_{0 \leq j \leq p}$ for System \rref{S1}.

It remains to prove that the feedback law \rref{fe:prop:S_1}, where now all the coefficient have been chosen, stabilizes System \rref{S1}. Actually the proof is almost the same as the one given in \cite{Teel92}, except that we allow the first level of saturation $\mu_n$ to have a slope different from $1$.

We prove that after a finite time any trajectory of the closed-loop system \rref{S1cc} enters a region in which the feedback \rref{fe:prop:S_1} becomes simply linear. To that end, we consider the Lyapunov function candidate $V_n(y_n):= \frac{1}{2} y_n^2$. Its derivative along the trajectories of \rref{S1cc} reads $\dot{V}_n= -y_n  \mu_n (y_n + \mu_{n-1}(z_{n-1}))$. From the choice of $\lambda$ and $\mu_{n-1}^{max}$, we obtain the following implication, with $\theta:= \inf\limits_{ r \in  [L_{\mu_n}/2 - \mu_{n-1}^{max} , S_{\mu_n} ]} \left\lbrace \mu_n (r) \right\rbrace $, 
\begin{equation}
\label{lyap}
\abs{y_n} \geq L_{\mu_{n}}/2\quad\Rightarrow\quad \dot{V}_n \leq  - \theta L_{\mu_{n}}/2 .
\end{equation}

We next show that there exists a time $T_1 \geq 0$ such that $ \abs{y_n(t)} \leq L_{\mu_{n}}/2 $, for all $t \geq T_1$.  To prove that, we consider the following alternative: either for every $t \geq 0 $, $\abs{y_n(t)} \leq L_{\mu_n} /2 $ and we are done, or there exist $T_0 \geq 0$ such that $\abs{y_n(T_0)} > L_{\mu_n} /2 $. In that case there exists $ \tilde{T}_0 \geq T_
0 $ such that $y_n(\tilde{T}_0)=L_{\mu_n} /2$ (otherwise thanks to \rref{lyap}, $ V_n(t) \rightarrow - \infty $ as $t \rightarrow \infty $ which is impossible). Due to \rref{lyap}, we have  $\abs{y_n(t)} < L_{\mu_n} /2$ in a right open neighbourhood of $\tilde{T}_0$. Suppose that there exists a positive time $\tilde{T}_1 > \tilde{T}_0$ such that $\abs{y_n(\tilde{T}_1)} \geq L_{\mu_{n}}/2 $. Then by continuity, there must exists $\tilde{T}_2 \in ( \tilde{T}_0 , \tilde{T}_1]$ such that 
$\abs{y_n(\tilde{T}_2)}= L_{\mu_n}/2$, and $\abs{y_n(t)} < L_{\mu_n}/2$ for all $t \in ( \tilde{T}_0 , \tilde{T}_2)$.
However, it then follows from \rref{lyap} that for a left open neighbourhood of $\tilde{T}_2$ we have $\abs{y_n(t)}> \abs{y_n(\tilde{T}_2)}=L_{\mu_n}/2$. This is a contradiction with the fact that on a right open neighbourhood of $\tilde{T}_0$ we have $\abs{y_n(t)} < L_{\mu_n} /2$. 
Therefore, for every $\tilde{T}_1 > \tilde{T}_0$, one has $\abs{y_n(\tilde{T}_1)} < L_{\mu_n} /2$ and the claim is proved.

In consequence we have that $\abs{y_n(t) + \mu_{n-1}(z_{n-1} (t))} \leq  L_{\mu_n}$ for all $t \geq T_1$.
Therefore $\mu_n$ operates in its linear region after time $T_1$. Similarly, we now consider $V_{n-1}(y_{n-1}):= \frac{1}{2} y_{n-1}^2$, whose derivative along the trajectories of \rref{S1cc} satisfies $\dot{V}_n= -  \alpha_{\mu_n}  y_{n-1} \mu_{n-1}(y_{n-1}+ \mu_{n-2}(y_{n-2}+ \ldots))$, for all $ t \geq T_1$.
Reasoning as before, there exists a time $T_2 >0$ such that $\abs{y_{n-1}(t)} \leq L_{\mu_{n-1}}/2$ and $\mu_{n-1}$ operates in its linear region for all $t \geq T_2$.

By repeating this procedure, we construct a time $T_n$ such that for every $t\geq T_n$ the whole feedback law becomes linear, that is $\Upsilon(y(t)) = -\alpha_{\mu_n} ( y_n(t) + \ldots + y_1(t) )$, when $t \geq T_n.$
Thus, after time $T_n$, the system \rref{S1cc} becomes linear and its local exponential stability follows readily. Thus the origin of the closed-loop system \rref{S1cc} is globally asymptotically stable.

With the linear change $y=Hx$ proposed in \rref{S1cc}, the closed-loop system \rref{S1cc} can be put into the form of the original system \rref{mult_int} in closed loop with $u=\Upsilon(Hx)$. Thus, the sought feedback law $\nu$ of Theorem \ref{Main_res} is obtained by $\nu(x)=\Upsilon(Hx)$. This leads to the following choices of parameters: for each $i \in  \llbracket 1, n-1$, $a_i  = L_{\sigma_{i+1}} \mu_i^{max} / (L_{\mu_{i+1}}     \sigma_i^{max})$,  $k_{n-i}^T x =  L_{\sigma_{n-i}}/L_{\mu_{n-i}}  \sum_{k=0}^i \binom{i}{k}( \alpha_{\tilde{\mu}}/ L_{\mu_{n}} )^k  x_{n-k}$,
$a_n = R_0 / \sigma_n^{max}$, and $k_n^T x =   x_n L_{\sigma_n}/ L_{\mu_{n}} $.

\subsection{Skew-symmetric systems with scalar input}
\label{subsec:th2}
Given $n\in\mathbb N_{\geq 1}$, let $A \in \reels^{n, n}$ be a skew-symmetric matrix and $b \in \reels^n$ such that $(A,b)$ is controllable.  Let $p\in\mathbb N$, $\alpha \geq 1/2 $ and $(R_j)_{0 \leq j \leq p}$ be a family of positive constants. The proof of Theorem \ref{theo2} is divided into two steps. We first prove that for any $\beta > 0$ and $\alpha \geq 1/2$, the origin of $\dot{x}=Ax+bu$ with the feedback law \rref{Yfeed} is GAS. We then show that for any $\alpha \geq 1/2$ there exists a positive constant $\beta$ such the feedback law \rref{Yfeed} is a $p$-bounded feedback law by $(R_j)_{0 \leq j \leq p}$ for system $\dot{x}=Ax+bu$.

Let $\beta$ be a positive constant and $\alpha \geq 1/2$. We define $A_{\beta}:=  A- \beta b b^{T}$. The matrix $A_\beta$ is then Hurwitz. To see this, observe that the Lyapunov equation $A_\beta^T+A_\beta =-2\beta b b^T$ holds and that the pair $(A_\beta,b)$ is controllable. Thus there exists a real symmetric positive definite matrix $P\in\reels^{n,n}$ such that $P A_{\beta} + A_{\beta}^T P = - \identity_n$.

Let $V(x):= x^T Px + K ( (1 + \norme{x}^2 )^{\alpha +1} -1  )$ be a candidate Lyapunov function with  $K:=( \norme{ Pb }  \beta )^2 / (\alpha +1) $. The derivative of $V$ along trajectories of $\dot{x}=Ax+b\nu(x)$ is then given by
\begin{align*}
\dot{V}(x) =  - \norme{x}^2 + 2 \beta ( 1 -1/(1 + \norme{x}^2)^\alpha  ) x^T P b b^T x  - 2 K( \alpha +1 ) ( b^T x )^2.
\end{align*} 
Using that $\abs{x^T P b} \leq \norme{x} \norme{P b}$ and 
$$
2 \beta ( 1 -1/(1 + \norme{x}^2 )^\alpha )  \abs{x^T P b b^T x }\leq
 \norme{x}^2 / 2  + 2  \beta^2 \norme{P b}^2 (b^T x)^2 , 
$$
we get that $\dot{V}(x) \leq -\norme{x}^2/2$. Therefore the origin of $\dot{x}=Ax+b\nu(x)$ is GAS.

We next give and prove two assertions which give explicit formula of the successive derivatives of the trajectories and of the control signal. 
\begin{assert}
\label{assert1}
Given any $\beta >0$ and $\alpha \geq 1/2$, each trajectory $x: \reels_{\geq 0} \to \reels^n$ of  $\dot{x}=Ax+b\nu(x)$ is $C^{\infty}$ and satisfies, for any $k \in \entiers_{\geq 1}$, with $U(\cdot) := \nu(x(\cdot))$, 
\begin{equation}
\label{eq:assert1}
x^{(k)}(t) = A^k x(t) + \sum\limits_{j=0}^{k-1} A^j b U^{(k-(j+1))}(t), \quad \forall t \geq 0.
\end{equation}
\end{assert}
\begin{proof}[Proof of Assertion \ref{assert1}]
The right hand side of $\dot{x}=Ax+b\nu(x)$ is globally and  Lipschitz $C^{\infty}$, and therefore this system is forward complete with trajectories are of class $C^{\infty}$ $(\reels_{\geq 0}$, $\reels^n)$. In particular, the successive derivatives of $U(\cdot)$ and $x(\cdot)$ are well defined. Equation \rref{eq:assert1} then follows by a trivial induction argument using differentiation of $\dot{x}=Ax+b\nu(x)$ at any order.
\end{proof}

\begin{assert}
\label{assert2}
Given $\beta >0$ and $\alpha \geq 1/2$, let $x: \reels_{\geq 0} \to \reels^n$ be any trajectory of $\dot{x}=Ax+b\nu(x)$ and let $G(\cdot):= 1 + \norme{x(\cdot)}^2 $. Then, for any $k \in \entiers_{\geq 1}$ and all $t\in\reels_{\geq 0}$, it holds that 
\begin{equation}
\label{eq:dg}
G^{(k)}(t)
 = \sum\limits_{m=0}^{k} \binom{k}{m} \sum\limits_{i=1}^n x^{(m)}_i(t) x^{(k - m)}_i(t), 
\end{equation} 
and, with $d_a := (-1)^a \prod\limits_{i=0}^a ( \alpha + i)$ and $B_{l,a}$ introduced in \rref{bell},  
\begin{eqnarray}
\label{assrt2:eq:u}
U^{(k)}(t) & = & - \beta \Big( \frac{b^{T} x^{(k)}(t)}{(1 + \norme{x(t)}^2)^\alpha } \nonumber \\
 & + & \sum\limits_{l=1}^k \binom{k}{l} \sum\limits_{a=1}^l d_a \frac{ B_{l,a}(G^{(1)}(t) , \ldots , G^{(l-a+1)}(t) ) b^T x^{(k-l)}(t) }{(1 + \norme{x(t)}^2)^{\alpha + a}} \Big).
\end{eqnarray} 

\end{assert}

\begin{proof}[Proof of Assertion \ref{assert2}]
Expression \rref{eq:dg} is readily obtained from the general Leibniz rule. In order to establish \rref{assrt2:eq:u}, let $f: \reels_{>0} \to \reels_{>0}$ be defined as $f(z):=z^{-\alpha}$. The feedback law can then be rewritten as $U(t)= - \beta b^T x(t) [f \circ G](t)$. Using the general Leibniz rule we get that, for any $k \in \entiers_{\geq 1}$ and  any $t\in\reels_{\geq 0}$, $U^{(k)}(t)= - \beta b^T x^{(k)}(t) [ f \circ G](t)  + \sum_{l=1}^k \binom{k}{l} [ f \circ G]^{(l)} ( t) b^T x^{(k-l)}(t)$. Thanks to Fa\`a Di Bruno's formula (Lemma \ref{lem:fa_di}), we obtain that, for each $l \in \llbracket 1 , k \rrbracket $ and each $t\in\reels_{\geq 0}$, $[ f \circ G]^{(l)}(t)=\sum_{a=1}^l f^{(a)}( G(t) ) B_{l,a} ( G^{(1)}(t) , \ldots , G^{(l-a+1)}(t) ).$
Since $f^{(a)}(z) =(-1)^a \prod_{i=0}^a ( \alpha + i) z^{-(a+\alpha) }$ for all $a \in \llbracket 1 , k \rrbracket $,  Assertion \ref{assert2} is proven.
\end{proof}

We now proceed with Step 2. Set $\underline{R}:= \min \left\lbrace R_0 , \ldots , R_p \right\rbrace$. We now prove by induction on $j \in \llbracket 0, p \rrbracket$ that there exist $\beta_j > 0$ such that, for any $\beta \leq \beta_j $, each trajectory of $\dot{x}=Ax+b\nu(x)$ satisfies $\sup \lbrace \abs{U^{(j_1)}(t)} : t\geq 0\rbrace \leq \underline{R}$ for $j_1 \in \llbracket 1 , j \rrbracket$.

Note that this in turn ensures that the feedback law \rref{Yfeed} is a $j$-bounded feedback law by $(R_{j_1})_{0 \leq j_1 \leq j}$.

We start by $j=0$. Since $\sup \lbrace \abs{ \nu (x)} : x \in \reels^n \rbrace \leq \beta $, the base case follows by choosing $\beta_0 = \underline{R}$. Now assume that, for a given $j \in \llbracket 0, p-1 \rrbracket$, there exists $\beta_j > 0$ such that for any $\beta \leq \beta_j $ the feedback law \rref{Yfeed} is a $j$-bounded feedback law by $(R_{j_1})_{0 \leq j_1 \leq j}$. Using Assertion \ref{assert1}, we get that for each $\beta \leq \beta_j $ there exists for each $i \in \llbracket 1 , n \rrbracket $ and $k \in \llbracket 1 , j+1 \rrbracket$ a multivariate polynomial $P_{i,k} : \reels^n \to \reels $ of degree $1$ (which not depend on $\beta$)  such that
\begin{eqnarray}
\label{eq:pr:th2:absx}
\abs{x^{(k)}_i(t)} \leq P_{i,k}(\abs{x_1(t)}, \ldots , \abs{x_2(t)}) ,\quad \forall t \geq 0.
\end{eqnarray}
From \rref{eq:dg} and \rref{eq:pr:th2:absx} it follows that, for each $k \in \llbracket 1 , j+1 \rrbracket$, there exists a multivariate polynomial $\underline{P}_{k} : \reels^n \to \reels $ of degree $2$, which not depend on $\beta$, such that
\begin{equation}
\label{eq:pr:th2:absG}
\abs{G^{(k)}(t)} \leq \underline{P}_{k}(\abs{x_1(t)}, \ldots , \abs{x_2(t)}), \quad \forall t \geq 0.
\end{equation}
In view of Remark \ref{rem:2}, \rref{eq:pr:th2:absx}, and \rref{eq:pr:th2:absG}, we conclude that, for each $l \in \llbracket 1 , j+1  \rrbracket$ and each $a \in \llbracket 1 , l \rrbracket $, there exists  a multivariate polynomial $\overline{P}_{l,a} : \reels^n \to \reels $ of degree $2a+1$, which not depend on $\beta$, such that, for any $t \geq 0$,
\begin{equation}
\label{eq:pr:th2:absbell}
\abs{B_{l,a}( G^{(1)}(t), \ldots , G^{(l-a+1)}(t) ) b^T x^{(k-l)}(t) } \leq \overline{P}_{l,a} (\abs{x_1(t)}, \ldots , \abs{x_2(t)}).
\end{equation}
Since $\bar P_{l,a}$ and $P_{i,j+1}$ are respectively of degree $2a+1$ and $1$ and recalling that $\alpha\geq 1/2$, we conclude that, for each $l \in \llbracket 1 , j+1  \rrbracket$ and each $a \in \llbracket 1 , l \rrbracket $, there exists a positive constant $M_{l,a,j}$ such that
\begin{equation}
\label{eq:pr:th2:exp1}
\sup \lbrace \abs{\overline{P}_{l,a} (x)/(1 + \norme{x}^2)^{ \alpha +a }} : x \in \reels^n \rbrace \leq M_{l,a,j},
\end{equation}
and, for each  $i \in \llbracket 1 , n \rrbracket$, there exists a positive constant $Q_{i,j}$ such that
\begin{equation}
\label{eq:pr:th2:exp2}
\sup \lbrace \abs{P_{i,j+1} (x)/(1 + \norme{x}^2)^{ \alpha}} :  x \in \reels^n  \rbrace \leq Q_{i,j}.
\end{equation}

Let $\beta \leq \beta_j$ and $x(\cdot)$ be a trajectory of $\dot{x}=Ax+b\nu(x)$. Thanks to Assertion \ref{assert2} (with $k=j+1$), we get that, for all $t\in\reels_{\geq 0}$, $\abs{U^{(j+1)}(t)}  \leq \xi (t)$ with
\begin{align*}
\xi(t) = \beta \Big( \frac{ \abs{b^{T} x^{(j+1)}(t)}}{(1 + \norme{x(t)}^2)^\alpha } +  \sum\limits_{l=1}^{j+1} \binom{j+1}{l} \sum\limits_{a=1}^l d_a \frac{\abs{ B_{l,a}(\underline{G}(t)) b^T x^{(j+1-l)}(t)} }{(1 + \norme{x(t)}^2)^{\alpha + a}} \Big),
\end{align*} where $B_{l,a}(\underline{G}(t))=B_{l,a}(G^{(1)}(t) , \ldots , G^{(l-a+1)}(t)) $. Therefore a straightforward computation using \rref{eq:pr:th2:absx}, \rref{eq:pr:th2:absG}, \rref{eq:pr:th2:absbell}, \rref{eq:pr:th2:exp1}, and \rref{eq:pr:th2:exp2} leads to the existence of a positive constant $M_j$ such that $\abs{U^{(j+1)}(t)} \leq \beta M_j$, for all positive time. Thus for all $\beta \leq \min \lbrace \beta_j, \underline{R}/M_j \rbrace$, it follows that $\sup \lbrace \abs{U^{(j+1)}(t)} : t\geq 0 \rbrace \leq \underline{R}$. This ends the induction on $j$ and concludes the proof of Theorem \ref{theo2}.

\section{Numerical examples}
\label{sec:simu11}
\subsection{The triple integrator}
In this subsection, we illustrate the applicability and the performance of the feedback law proposed for Case 1 on a particular example. We use the procedure described in Section \ref{sec:proof_th} in order to compute a $2$-bounded feedback law by $(2,20,18)$ for the multiple integrator of length three. Our set of saturation functions is $\sigma_1 = \sigma_2 = \sigma_3 = \sigma$, where $\sigma$ is an $\scal(2)$ saturation function with constants $(2,1,2,1)$ given by $r$ if $ \abs{r} \leq 1$, $\sign (r) ( -4 + 15 \abs{r} - 18 r^2 + 10 \abs{r}^3 - 2r^4 )$ when $1 \leq \abs{r} \leq 1.5$, $2 \sign (r) ( 25 - 60 \abs{r} + 54 r^2 -21 \abs{r}^3 + 3 r^4 )$ if  $1.5 \leq \abs{r} \leq 2$, and $2  \sign (r)$ otherwise.
We choose $\mu_2^{max} = 2/5$, $L_{\mu_2} = 1/5$, $\mu_1^{max} = 1/12$, and $L_{\mu_1} = 1/24$. Following the procedure, we obtain that the two first time derivatives of the control signal $U(\cdot)=\Upsilon(y(\cdot))=\Upsilon(Hx(\cdot))$, with $H=( h_1 , h_2 , h_3)$ where $h_1=(1 / \lambda^2, \:  2 / \lambda , \: 1)^T  $, $h_2= (0 , \: 1/ \lambda ,\: 1)^T$ and $ h_3=e_3$, satisfy $\sup \lbrace \abs{u^{(1)}(t)} : t\geq 0 \rbrace  \leq 7.91 / \lambda^2 + 4.35 / \lambda$ and $\sup \lbrace \abs{u^{(2 )}(t)} : t\geq 0 \rbrace  \leq 26.2/ \lambda + 396 / \lambda^2 + 1147.2 / \lambda^3 +125.2 / \lambda^4$.

One get that $\sup \lbrace \abs{U^{(1)}(t)} : t\geq 0 \rbrace  \leq 0.9 $, and  $\sup \lbrace \abs{U^{(2)}(t)} : t\geq 0 \rbrace  \leq 18 $ by choosing $\lambda=6.5$. Observing that the amplitude of $U$ is below $2$ by construction, this confirms the fact that this is a $2$-bounded feedback by $(2,20,18)$. The desired feedback is then given by
\begin{align*}
\nu(x) = &  - \sigma \Big( \frac{1}{6.5} \Big( x_3 + \frac{1}{5} \sigma \big(5 ( h_2^T x + \frac{1}{24} \sigma \big( 24 (h_1^T x )) \big)\big)\Big)\Big).
\end{align*}

This feedback law is tested in simulations and the results are presented in Figure \ref{figub}. Trajectories of triple integrator with the above feedback are plotted in grey for several initial conditions. The corresponding values of the control law and its time derivatives up to order $2$ are shown in Figure \ref{figub}. These grey curves validate the fact that asymptotic stability is reached and that the control feedback magnitude, and two first derivatives, never overpass the prescribed values $(2,20,18)$. In order to illustrate the behaviour of one particular trajectory, the specific simulation obtained for initial condition  $x_{10}=446.7937$, $x_{20} = -69.875$ and $x_{30}=11.05$ is highlighted in bold black. It can be seen from Figure \ref{figub} that our procedure shows some conservativeness as the amplitude of the second derivative of the feedback never exceeds the value $2$, although maximum value of $18$ was tolerated.

\begin{figure}[thpb]
     \centering
     \includegraphics[height=8cm, width=\textwidth]{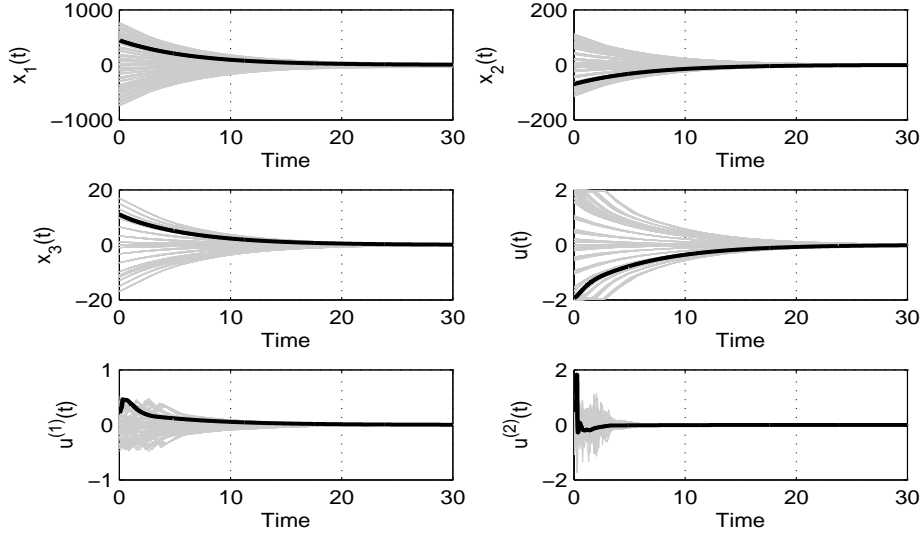}
     \caption{Evolution of the states, the control and its derivative up to order $2$ for a set of initial conditions.}
    \label{figub}
\end{figure}
\subsection{The harmonic oscillator} \label{sec:simu}
We finally test the performance of the control law proposed for Case 2 through example of a $1$-bounded feedback law by $(2,2)$ for an harmonic oscillator. We consider the following system $ \dot{x}_1 =  5 x_2$, $\dot{x}_{2}  = - 5 x_1 + u$.

In accordance with Theorem \ref{theo2}, we take $u(x_1,x_2)= - \beta x_2/\sqrt{1 + x_1^2 + x_2^2}$ with $\beta= (-5 + \sqrt{41})/{4}$. The behaviour of the resulting closed-loop system and the corresponding values of the feedback and its first time derivative are shown in Figure \ref{figud} for initial conditions $x_{10}=2$ and $x_{20}=-2$. It can be seen that the values of the control and its time derivative stay below $2$ as desired.

\begin{figure}[thpb]
      \centering
      \includegraphics[height=6cm, width=\textwidth]{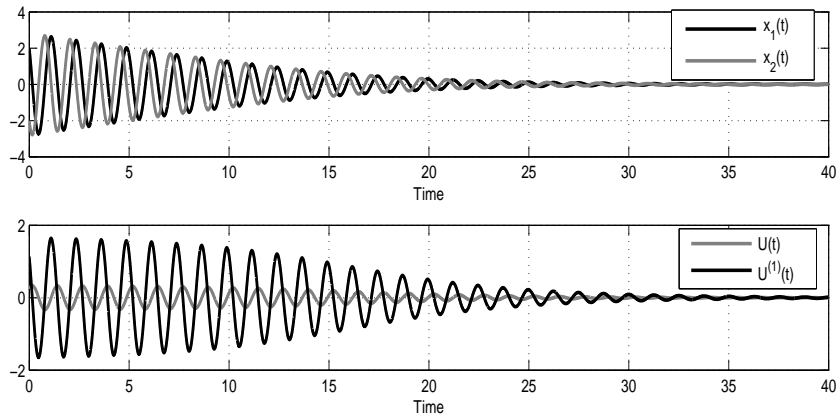}
      \caption{Evolution of the states, the control and its first derivative for initial conditions $x_{10}=2$ and $x_{20}=-2$.} 
      \label{figud}
\end{figure}

\bibliographystyle{elsarticle-num} 
\bibliography{biblio}

\begin{thebibliography}{10}
\expandafter\ifx\csname url\endcsname\relax
  \def\url#1{\texttt{#1}}\fi
\expandafter\ifx\csname urlprefix\endcsname\relax\def\urlprefix{URL }\fi
\expandafter\ifx\csname href\endcsname\relax
  \def\href#1#2{#2} \def\path#1{#1}\fi

\bibitem{SSY}
H.~J. Sussmann, E.~D. Sontag, Y.~Yang, A general result on the stabilization of
  linear systems using bounded controls, IEEE Trans. Autom. Control 39 (1994)
  2411--2425.

\bibitem{Lin1993225}
Z.~Lin, A.~Saberi, Semi-global exponential stabilization of linear systems
  subject to ``input saturation'' via linear feedbacks, Syst. Contr. Letters
  21~(3) (1993) 225 -- 239.

\bibitem{AlvarezRamirez1994247}
J.~Alvarez-Ramirez, R.~Su\'arez, J.~Alvarez, Semiglobal stabilization of
  multi-input linear systems with saturated linear state feedback, Syst. Contr.
  Letters 23~(4) (1994) 247 -- 254.

\bibitem{hu2001semi}
T.~Hu, Z.~Lin, Y.~Shamash, Semi-global stabilization with guaranteed regional
  performance of linear systems subject to actuator saturation, Syst. Contr.
  Letters 43~(3) (2001) 203--210.

\bibitem{Tarbouriech:2006ew}
S.~Tarbouriech, C.~Prieur, J.~M. Gomes Da~Silva, {Stability Analysis and
  Stabilization of Systems Presenting Nested Saturations}, IEEE Trans. Autom.
  Control 51~(8) (2006) 1364--1371.

\bibitem{Saberi:2002ux}
A.~Saberi, Z.~Lin, A.~Teel, {Control of linear systems with saturating
  actuators}, IEEE Trans. Autom. Control 41~(3) (2002) 368--378.

\bibitem{Dai:2009kv}
D.~Dai, T.~Hu, A.~R. Teel, L.~Zaccarian, {Output feedback design for saturated
  linear plants using deadzone loops}, Automatica 45~(12) (2009) 2917--2924.

\bibitem{Lin:1991vw}
Y.~Lin, E.~D. Sontag, {A universal formula for stabilization with bounded
  controls}, Syst. Contr. Letters 16~(6) (1991) 393--397.

\bibitem{Mazenc:2002wt}
F.~Mazenc, L.~Praly, {Adding integrations, saturated controls, and
  stabilization for feedforward systems}, IEEE Trans. Autom. Control 41~(11)
  (2002) 1559--1578.

\bibitem{ANCHMA05}
D.~Angeli, Y.~Chitour, L.~Marconi, {Robust stabilization via saturated
  feedback}, IEEE Trans. Autom. Control 50~(12) (2005) 1997--2014.

\bibitem{AZCHCHGR15}
R.~Azouit, A.~Chaillet, Y.~Chitour, L.~Greco, {Strong iISS for a class of
  systems under saturated feedback}, Submitted to Automatica.

\bibitem{FULLER69}
A.~T. Fuller, In-the-large stability of relay and saturating control systems
  with linear controllers, Int. J. of Control 10~(4) (1969) 457--480.

\bibitem{SY91}
H.~Sussmann, Y.~Yang, On the stabilizability of multiple integrators by means
  of bounded feedback controls, in: IEEE Conf. Decision Contr., Vol.~1, 1991.

\bibitem{Teel92}
A.~R. Teel, Global stabilization and restricted tracking for multiple
  integrators with bounded controls, Syst. Contr. Letters 18~(3) (1992) 165 --
  171.

\bibitem{marchand2005global}
N.~Marchand, A.~Hably, Global stabilization of multiple integrators with
  bounded controls, Automatica 41~(12) (2005) 2147--2152.

\bibitem{saberi2000}
A.~Saberi, P.~Hou, A.~A. Stoorvogel, On simultaneous global external and global
  internal stabilization of critically unstable linear systems with saturating
  actuators, IEEE Trans. Autom. Control 45~(6) (2000) 1042--1052.

\bibitem{Megretski96bibooutput}
A.~Megretski, {BIBO} output feedback stabilization with saturated control, in:
  In IFAC World Congress, 1996, pp. 435--440.

\bibitem{chitour2015}
Y.~Chitour, M.~Harmouche, S.~Laghrouche, {$L_p$-stabilization of integrator
  chains subject to input saturation using Lyapunov-based homogeneous design},
  arXiv:1411.6262v1.

\bibitem{LICHSO96}
W.~Liu, Y.~Chitour, E.~Sontag, {On finite-gain stabilizability of linear
  systems subject to input saturation}, SIAM Journal on Control and
  Optimization 34~(4) (1996) 1190--1219.

\bibitem{lauvdal97}
T.~Lauvdal, R.~Murray, T.~Fossen, Stabilization of integrator chains in the
  presence of magnitude and rate saturations: a gain scheduling approach, in:
  IEEE Conf. Decision Contr., Vol.~4, 1997, pp. 4004--4005 vol.4.

\bibitem{Freeman:1998tp}
R.~Freeman, L.~Praly, {Integrator backstepping for bounded controls and control
  rates}, IEEE Trans. Autom. Control 43~(2) (1998) 258--262.

\bibitem{SilvaTarbouch03}
J.~Gomes~da Silva, J.M., S.~Tarbouriech, G.~Garcia, Local stabilization of
  linear systems under amplitude and rate saturating actuators, IEEE Trans.
  Autom. Control 48~(5) (2003) 842--847.

\bibitem{Galeani}
S.~Galeani, S.~Onori, A.~Teel, L.~Zaccarian, A magnitude and rate saturation
  model and its use in the solution of a static anti-windup problem, Syst.
  Contr. Letters 57~(1) (2008) 1 -- 9.

\bibitem{saberi2012}
A.~Saberi, A.~Stoorvogel, P.~Sannuti, Internal and External Stabilization of
  Linear Systems with Constraints, Systems \& Control: Foundations \&
  Applications, Birkh{\"a}user Boston, 2012.

\bibitem{fdb}
M.~Hazewinkel, Encyclopaedia of Mathematics (1), Encyclopaedia of Mathematics:
  An Updated and Annotated Translation of the Soviet "Mathematical
  Encyclopaedia", Springer, 1987.

\end{thebibliography}

\end{document}